\newtheorem{theorem}{Theorem}
\newtheorem{lemma}{Lemma}
\newtheorem{proposition}{Proposition}
\newtheorem{conjecture}{Conjecture}
\theoremstyle{definition}
\newtheorem{definition}{Definition}
\newtheorem*{remark}{Remark}
\definecolor{dark-blue}{rgb}{0.15,0.15,0.4}
\definecolor{dark-red}{rgb}{0.4,0.15,0.15}
\definecolor{medium-red}{rgb}{0.6,0,0}
\definecolor{medium-blue}{rgb}{0,0,0.6}
\begin{document}
\title{On the uniqueness of co-circular four body central configurations}
\author{Manuele Santoprete\thanks{ Department of Mathematics, Wilfrid Laurier
University E-mail: msantopr@wlu.ca}} 
 \maketitle

\begin{abstract}
    We study central configurations lying on a common circle in the Newtonian
four-body problem. Using  a topological argument  we prove that there is at most one co-circular central configuration for each cyclic ordering of the  masses on the circle. 

  \end{abstract}

\renewcommand{\thefootnote}{\alph{footnote})}
\tableofcontents
%\maketitle

\section{Introduction}%
The Newtonian $n$-body problem is the study of the dynamics of $n$ point
particles with positive masses, moving according to Newton's laws of motion.
A {\it central configuration} (c.c.) of the $n $-body problem is a configuration of 
$n$ bodies where the  acceleration vector of each body is a common scalar multiple of its position vector with respect to the center of mass. 
The study of central configurations  in the Newtonian $n$-body problem  has a long history dating back to Euler and Lagrange,  and has become an active  sub-field of celestial mechanics.
While the  relative equilibria of the three-body problem have long been known, a complete classification is not  known for $ n>3 $. Even the finiteness of central configurations is a hard problem and it was only established in  the four-body problem by Hampton and Moeckel \cite{hampton2006finiteness} and in the five-body problem  (except for masses in a codimension two subvariety)  by Albouy and Kaloshin \cite{albouy2012finiteness}. A related problem is the study of central configurations for point vortices. A  classification was obtained in the four-vortex problem in the  case some of the vorticities are equal \cite{hampton2014relative,perez2015symmetric}.

In this paper we focus on  a subset of the four-body convex central configurations. A configuration is convex if no body lies inside or on  the convex hull formed by the other three bodies. MacMillan and Bartky \cite{macmillan1932permanent} proved that for any four positive masses
and any assigned order, there is at least one  convex planar central configuration of the
4-body problem with that order. See also   Xia \cite{xia2004convex} and Moeckel \cite{Moeckel2015} for  simpler proofs. Yoccoz \cite{Yoccoz1986} conjectured that there is only one such  configuration. 
%Albouy and Fu  \cite{albouy2007euler}  (see also \cite{albouy2008symmetry,perez2007convex}) conjectured that there is only one such  configuration
\begin{conjecture}[Sim\'o-Yoccoz]\label{conj:convex}  There is a unique  convex planar central configuration of the
4-body problem for each ordering of the masses in the boundary of its convex hull.
\end{conjecture}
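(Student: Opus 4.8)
The plan is to fix a cyclic ordering, say $1,2,3,4$, and to study the \emph{mass map}
\[\mathcal{M}\colon\ \mathcal{S}\ \longrightarrow\ \Delta,\]
sending a convex central configuration (taken modulo similarity) with that ordering to its normalized mass vector $(m_1,m_2,m_3,m_4)$ in the open mass simplex $\Delta$. Since MacMillan--Bartky already furnish surjectivity of $\mathcal{M}$, Conjecture~\ref{conj:convex} is exactly the assertion that $\mathcal{M}$ is injective, and I would in fact aim to prove it is a homeomorphism. To make $\mathcal{S}$ tractable I would use Dziobek's formulation: writing $s_{ij}=r_{ij}^{-3}$ and letting $\Delta_i$ denote the signed area of the triangle on the three bodies other than $i$, a configuration is central precisely when there exist constants $\lambda,\sigma$ with $m_im_j(s_{ij}-\lambda)=\sigma\,\Delta_i\Delta_j$ for every pair. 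Eliminating the masses yields the two mass-independent \emph{shape relations}
\[(s_{12}-\lambda)(s_{34}-\lambda)=(s_{13}-\lambda)(s_{24}-\lambda)=(s_{14}-\lambda)(s_{23}-\lambda),\]
which cut out $\mathcal{S}$ as a $3$-dimensional set, while the remaining Dziobek equations recover the mass ratios as explicit functions of the shape.

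First I would establish that $\mathcal{M}$ is \emph{proper} over the interior of $\Delta$: if a sequence of convex central configurations has masses converging into the interior, then the configurations cannot degenerate. Concretely, I would rule out collisions ($r_{ij}\to0$), escapes, and collapse of the quadrilateral onto a line or a point by combining the Dziobek relations with the definite sign pattern that convexity forces on the $\Delta_i$. The key mechanism is that each such degeneration drives some mass ratio to $0$ or $\infty$, i.e.\ pushes the mass vector to the boundary $\partial\Delta$, contradicting the hypothesis. Properness is what upgrades a local statement into a global one.

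Next I would show that $\mathcal{M}$ is a local diffeomorphism, equivalently that every convex central configuration is \emph{nondegenerate} as a critical point of the Dziobek variational problem, so that the linearization of the defining equations has nonzero determinant. Granting this together with properness, $\mathcal{M}$ becomes a covering map onto the contractible cell $\Delta$; since $\Delta$ is simply connected, the restriction of $\mathcal{M}$ to each connected component of $\mathcal{S}$ is a homeomorphism and $\mathcal{M}$ has a well-defined degree. Injectivity then reduces to showing this degree equals $1$: I would attach to each configuration the sign of the above determinant and prove that all convex configurations carry the \emph{same} sign, so that no cancellation occurs; the number of preimages is then constant, and being at least $1$ by existence it is forced to equal $1$.

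The main obstacle is precisely the pair of global inputs in the last step: proving nondegeneracy of \emph{every} convex central configuration and controlling the common sign. Nondegeneracy of four-body central configurations is not known in general, and a single degenerate convex configuration would create a branch point through which extra sheets of the covering could appear; similarly, a sign reversal along $\mathcal{S}$ would permit a fold carrying two configurations over one mass vector. This is the point at which the full conjecture resists a direct topological attack, and it is exactly here that the co-circular hypothesis studied in the present paper intervenes: restricting to a common circle lowers the effective dimension and rigidifies the geometry enough to make the nondegeneracy and sign computations tractable, which is what lets the topological argument go through in that case.
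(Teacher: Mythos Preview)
The statement you are attempting to prove is Conjecture~\ref{conj:convex}, which the paper explicitly presents as an \emph{open conjecture}; the paper does not prove it. What the paper does prove is the special case of co-circular configurations (Theorem~\ref{thm:uniqueness}), and it does so by an entirely different mechanism than your mass-map/degree argument: it shows that co-circular central configurations are critical points of $U|_{\mathcal{M}^{+}}$ where $\mathcal{M}^{+}$ is cut out by the Ptolemy relation $P=0$ and $I=1$, proves by an explicit Hessian computation that every such critical point is a nondegenerate local minimum, shows $\mathcal{M}^{+}$ is contractible with $\chi(\mathcal{M}^{+})=1$, and concludes via the Morse relation that there is exactly one critical point.

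Your proposal is not a proof but a strategic outline, and you yourself correctly identify the fatal gap: the argument requires that \emph{every} convex four-body central configuration be nondegenerate as a constrained critical point, together with a uniform sign on the Jacobian of the mass map. Neither fact is known in general; nondegeneracy of planar four-body central configurations is itself a well-known open problem. Without nondegeneracy, $\mathcal{M}$ need not be a local diffeomorphism, the covering argument collapses, and the degree is not even defined. So the proposal does not close the conjecture, and there is no proof in the paper to compare it to. Your final paragraph accurately diagnoses why the co-circular restriction succeeds: the Ptolemy constraint replaces the Cayley--Menger constraint and makes the Hessian of the Lagrangian explicitly positive definite, which is precisely the nondegeneracy input your general scheme lacks.
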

This conjecture is also implicit  in Sim\'o's paper  \cite{simo1978relative} and, according to Alain Albouy, it is likely the result of several conversations between Sim\'o and Yoccoz, and thus it is reasonable to call it Sim\'o-Yoccoz conjecture.  
The conjecture was first published, as far as we know, by Albouy and Fu  \cite{albouy2007euler}, see also  (see also \cite{albouy2008symmetry,perez2007convex}), and  was  included in the well known list of open problems on the classical $n$-body problems compiled by Albouy, Cabral and Santos \cite{albouy2012some}.
Results related to this conjecture were obtained by either putting restrictions on the geometry or restrictions   on the masses. 
In 1932, MacMillan and Bartky \cite{macmillan1932permanent} already  proved uniqueness in the particular case of  isosceles trapezoid configurations with two pairs of equal masses  located at adjacent vertices of a trapezoid. Similar results were also obtained by Xie  \cite{xie2012isosceles}.
The conjecture is also  known to be true if all the masses are equal \cite{albouy1995symetrie,albouy1996symmetric}, if two pairs of masses are equal \cite{perez2007convex,albouy2008symmetry,fernandes2017convex}, and for the case of three small masses \cite{corbera2015bifurcation}. Some of these   results also   hold for homogeneous power-law potentials  \cite{albouy2008symmetry,Fernandes_2019}.

Our goal in this work is to prove  prove the conjecture for the four-body {\it co-circular central configurations} (c.c.c's), namely those four-body c.c's which lie on a common circle.  Specifically we prove the following theorem 
\begin{theorem} \label{thm:uniqueness}
    There is at most one co-circular  central configuration of four bodies for each cyclic ordering of the masses. 
\end{theorem}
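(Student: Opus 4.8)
The plan is to turn the co-circular central configuration equations into a small, fully explicit system by exploiting the circle geometry, and then to prove uniqueness of its positive solutions by a degree-theoretic argument. Since four coplanar bodies form a codimension-one (``Dziobek'') configuration, I would start from the Dziobek form of the equations: the configuration is central if and only if there are constants $\lambda,\mu$ with $r_{ij}^{-3}-\lambda=\mu\,A_iA_j/(m_im_j)$ for every pair $i\neq j$, where $A_i$ is the signed area of the triangle on the three bodies other than $i$. The role of co-circularity is the classical identity $\operatorname{Area}=(\text{product of the three chords})/(4\rho)$ for a triangle inscribed in a circle of radius $\rho$: it makes every $A_i$ a monomial in the mutual distances, so that after clearing denominators the system becomes $m_im_j\bigl(r_{ij}^{-1}-\lambda\,r_{ij}^{2}\bigr)=\varepsilon_{ij}\,\mu'\,r_{ij}r_{kl}$ for all pairs, where $\{k,l\}$ is the complement of $\{i,j\}$, $\mu'$ is a new constant, and $\varepsilon_{ij}=+1$ on the two diagonal pairs of the cyclic order and $-1$ on the four side pairs. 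Setting $g_{ij}:=m_im_j(r_{ij}^{-1}-\lambda r_{ij}^{2})$, this forces $g_{12}=g_{34}$, $g_{13}=g_{24}$, $g_{14}=g_{23}$ and, via Ptolemy's relation $r_{13}r_{24}=r_{12}r_{34}+r_{14}r_{23}$ for the inscribed quadrilateral, the linear identity $g_{12}+g_{13}+g_{14}=0$; together with proportionality by $\mu'$ and mass positivity these are equivalent to the full system. Note that the sign pattern already forces $\lambda$ to lie strictly between the two diagonal values $r_{13}^{-3},r_{24}^{-3}$ and the four side values, which constrains the admissible geometry.

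Next I would introduce a bounded parameter space. Fixing the cyclic order $(1,2,3,4)$ and normalizing the circle to radius $1$, a co-circular configuration is determined, up to rotation, by its four central angles --- a point of an open $3$-simplex --- equivalently by the four consecutive chord lengths, from which the diagonals are recovered by Ptolemy and half-angle formulas. After eliminating $\mu'$ and reducing the $g$-relations, the system determines, for each such configuration that is central, the mass ratios and the multiplier $\lambda$; this defines a map $\Phi$ from the set of co-circular central configurations with the given order to the space of normalized mass vectors, whose image is exactly the masses admitting such a configuration, and Theorem~\ref{thm:uniqueness} is precisely the assertion that $\Phi$ is injective.

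To prove injectivity I would run a Hadamard-type (global inverse function) argument: show that the differential of $\Phi$, or of a suitable two-variable combination of mass ratios built from it, is everywhere nonsingular, so that the relevant map is a local homeomorphism; show it is proper onto its image, i.e. that no sequence of co-circular central configurations can degenerate --- two bodies colliding, a body leaving the circle, a mass tending to $0$ or $\infty$, or $\lambda$ escaping to an endpoint of its admissible interval --- using Ptolemy's inequality, the monotonicity of $r\mapsto r^{-3}$, and the positivity and sign constraints of the first step; and then, since a proper local homeomorphism between connected manifolds with simply connected target is a homeomorphism, conclude injectivity, pinning the global degree by the single explicit solution in each order, namely the square with four equal masses (which one checks directly to be the only co-circular central configuration with equal masses). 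An equivalent packaging is a Poincar\'e--Hopf count: every co-circular central configuration is a nondegenerate zero of fixed index, and the indices sum to $1$ by a boundary computation.

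I expect the two genuinely hard ingredients to be the properness analysis and the sign of the Jacobian. Ruling out degenerating families of co-circular central configurations needs quantitative estimates rather than soft arguments --- in particular keeping the masses bounded away from $0$ and $\infty$, and the shape away from collisions, as one approaches the boundary of the simplex, and checking the topology (connectedness, simple connectivity) of domain and image. And the nonvanishing, with uniform sign, of the relevant Jacobian determinant --- equivalently, the nondegeneracy of every co-circular central configuration --- is where essentially all of the computation lives; the point is to carry it out using the two structural features peculiar to this setting, the monomial form of the areas and the single linear Ptolemy relation, rather than the unwieldy generic distance equations. That Jacobian/nondegeneracy estimate is, I expect, the crux of the proof.
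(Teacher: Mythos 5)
What you have written is a strategy, not a proof: the two steps you yourself identify as ``the crux'' --- the everywhere-nonvanishing Jacobian (equivalently, nondegeneracy of every co-circular central configuration) and the properness/no-degeneration analysis --- are exactly the content of the theorem, and neither is carried out. Beyond that, the global-inverse-function framing has a structural difficulty that is not merely computational. Your map $\Phi$ sends configurations to masses, and its image is the set of mass vectors admitting a co-circular central configuration with the given order; by Cors--Roberts this is a proper (and rather complicated) subset of the mass simplex. A proper local homeomorphism onto its image is only a covering map, so to conclude injectivity you must prove that the domain is connected and that $\Phi(D)$ is simply connected --- but understanding the topology of $\Phi(D)$ is essentially equivalent to the classification you are trying to establish, so the argument risks circularity. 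Likewise, pinning the global degree at the square presupposes uniqueness in the equal-mass case, which, while known, is itself a nontrivial theorem.

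The paper takes a genuinely different route that sidesteps all of this by fixing the masses once and for all. Using Ptolemy, the co-circular configurations with a given cyclic order are characterized inside the set $\mathcal{M}^{+}=\{\,\mathbf{r}\in(\mathbb{R}^{+})^{6}: I(\mathbf{r})=1,\ P(\mathbf{r})=0\,\}$, and the Pech identity $\tfrac{1}{2}H=PQ-K^{2}$ shows that the Cayley--Menger constraint may be dropped: its gradient is parallel to $\nabla P$ wherever $P=0$ and $H=0$, so critical points of $U|_{\mathcal{M}^{+}}$ lying in $\mathcal{D}$ are exactly the co-circular central configurations. An explicit principal-minor computation on $D^{2}L=D^{2}U+\lambda M D^{2}I+\sigma D^{2}P$ (a diagonal plus anti-diagonal matrix) shows every critical point of $U|_{\mathcal{M}^{+}}$ is a nondegenerate minimum; a change of variables identifies $\mathcal{M}$ with $S^{2}\times S^{2}$ and $\mathcal{M}^{+}$ with a contractible subset, so $\chi(\mathcal{M}^{+})=1$; and Morse theory then forces exactly one critical point on all of $\mathcal{M}^{+}$, hence at most one on $\mathcal{D}$. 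If you want to salvage your degree-theoretic count, the honest version of it is precisely this Poincar\'e--Hopf/Morse computation on a space whose topology you can actually determine --- which is what the paper does --- rather than on the unknown image of $\Phi$ in mass space.
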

It is known, however, that for most values of the masses  there are  no co-circular configuration \cite{cors2012four}.
Although the co-circular four body problem may seem somewhat far-fetched, it is hoped that this work will prove useful in understanding the conjecture for  general four-body convex configurations.
 Furthermore, the co-circular  problem  has already attracted some attention \cite{cors2012four,hampton2005co,hampton2016splendid,llibre2015co,deng2017some,albouy1995symetrie,alvarez2013co}, in part because  of the following  conjecture, proposed by Alain Chenciner in 2001 \cite{llibre2015co,albouy2012some}.

 \begin{conjecture}
 Is the regular $n$-gon with equal masses the unique central configuration such that all the bodies lie on a circle, and the center of mass coincides with the center of the circle? 
 \end{conjecture}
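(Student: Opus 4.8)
The plan is to reduce the problem to a finite system of scalar identities for the angular positions and the masses, to recast the desired conclusion as a vanishing–moment (quadrature) condition, and then to attack that condition using the strong rigidity forced by the center-of-mass hypothesis. After scaling I normalize the common circle to have radius $R=1$ centered at the origin, so that $|q_i|=1$ and, by hypothesis, $\sum_i m_i q_i=0$. Writing the central configuration equation $\sum_{j\neq i} m_j(q_j-q_i)/r_{ij}^3=-\lambda(q_i-c)$ with $c=0$ and taking its inner product with $q_i$, together with the chord identity $q_i\cdot q_j-1=-\tfrac12 r_{ij}^2$, collapses the radial part to the equipotential condition
\begin{equation}
V_i:=\sum_{j\neq i}\frac{m_j}{r_{ij}}=2\lambda\qquad\text{for every }i,
\end{equation}
so that all the $V_i$ are equal; here the hypothesis that the center of mass sits at the center of the circle is essential. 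Taking instead the inner product with the tangent direction at $q_i$ annihilates the right-hand side and yields $\partial U/\partial\theta_i=0$, i.e. the angles form a critical point of the Newtonian potential restricted to the circle. These two families, together with $\sum_i m_i q_i=0$, are the necessary conditions I would work with.

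Next I would reformulate the target. Writing $z_j=e^{i\theta_j}$ and $\sigma_p=\sum_i m_i z_i^{p}$, the desired configuration---equal masses at the vertices of a regular $n$-gon---is characterized, among positive $n$-atomic measures on the circle, by the vanishing of the trigonometric moments $\sigma_1=\sigma_2=\cdots=\sigma_{n-1}=0$ (a standard truncated-moment/minimal-quadrature argument, using positivity of the masses). The center-of-mass hypothesis already gives $\sigma_1=0$, so the whole statement is equivalent to propagating this vanishing up to order $n-1$. Contracting the central configuration equations against $m_i z_i^{a}$, summing over $i$, symmetrizing, and using the identity $(z_i-z_j)^2=-r_{ij}^2\,z_iz_j$, one obtains the moment recursion
\begin{equation}
-2\lambda\,\sigma_{a+1}=\sum_{l=1}^{a}T_{a+1-l,\,l},\qquad T_{p,q}:=\sum_{i\neq j}\frac{m_i m_j\,z_i^{p}z_j^{q}}{r_{ij}},
\end{equation}
while the equipotential condition supplies the boundary data $T_{s,0}=T_{0,s}=2\lambda\,\sigma_s$.

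The hard part is closing this system. The quantities $T_{p,q}$ with $p,q\ge 1$ are not expressible through the $\sigma_s$ alone, because $r_{ij}=|z_i/z_j-1|$ depends only on the ratio $z_i/z_j$ and thereby couples the mass values to the angular gaps; this coupling is exactly what blocks a direct induction and is, I expect, the central obstacle, and the reason the conjecture remains open for general $n$. The route I would pursue is to read the equipotential identity as an electrostatic-equilibrium statement and to combine it with positivity of the masses through a rearrangement or discrete maximum-principle argument, aiming to show that any departure from equal spacing and equal masses strictly perturbs some $V_i$, thereby forcing all the $\sigma_p$ to vanish. The difficulty is that the Newtonian kernel $1/r$ is not logarithmic, so the classical uniqueness of the equilibrium measure---which for the logarithmic kernel would immediately single out the uniform distribution---does not transfer, and a genuinely new convexity input seems to be required.

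For the four-body case that is the focus of the present paper the situation is more favorable: the system $(1)$–$(2)$ together with $\sum_i m_i q_i=0$ is then finite and small, and I would hope to settle it by pairing these identities with the degree-theoretic counting method behind Theorem~\ref{thm:uniqueness}, which already controls co-circular central configurations for a fixed cyclic ordering of the masses. In that setting the symmetric solution (equal masses at a regular polygon) is an explicit fixed point of the reflection symmetry of the equations, and the uniqueness-per-ordering statement can be leveraged to show that any co-circular central configuration with center of mass at the center must be invariant under that reflection, which I would then bootstrap to full dihedral symmetry and hence to the regular square with equal masses.
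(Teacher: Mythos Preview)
The statement you are attempting is Conjecture~2 (Chenciner's question), and the paper does \emph{not} prove it; it is quoted only as motivation for studying co-circular configurations. There is therefore no proof in the paper to compare against, and your proposal should be read as an attack on an open problem rather than as a reconstruction of an omitted argument.

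On the substance: your derivation of the equipotential condition $V_i=2\lambda$ from the radial projection is correct and well known, as is the tangential condition $\partial U/\partial\theta_i=0$. The moment reformulation via $\sigma_p=\sum_i m_i z_i^{p}$ is a clean restatement, and the quadrature characterization of the regular $n$-gon with equal masses is valid. You are also candid that the recursion for $T_{p,q}$ does not close because the kernel $1/r_{ij}$ depends on $z_i/z_j$ rather than on $z_i$ and $z_j$ separately; this is indeed the obstruction, and you correctly identify it as the reason the conjecture is open for general $n$. So for general $n$ what you have is an outline and a diagnosis, not a proof, and you say as much.

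The genuine gap is in your $n=4$ endgame. You propose to use Theorem~\ref{thm:uniqueness} (uniqueness per cyclic ordering of the masses) to force reflection invariance and then bootstrap to dihedral symmetry. But Theorem~\ref{thm:uniqueness} is a statement about a \emph{fixed} mass vector in a \emph{fixed} cyclic order. Reflecting a co-circular configuration with masses $(m_1,m_2,m_3,m_4)$ produces a co-circular central configuration with the reversed cyclic order $(m_1,m_4,m_3,m_2)$; unless you already know $m_2=m_4$ (or some analogous palindromic relation), this is a \emph{different} ordering, and uniqueness for that ordering says nothing about coincidence with the original. In the paper's equivalence relation reflections are already quotiented out, so the reflected configuration is tautologically equivalent to the original and Theorem~\ref{thm:uniqueness} yields no new constraint. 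Thus your bootstrap requires as input exactly the mass symmetry you are trying to deduce. To close the $n=4$ case along these lines you would need an independent argument that the center-of-mass-at-center hypothesis forces at least one pair of equal masses, and that step is missing. (For the record, the $n=4$ case is settled in the literature---see Hampton's paper cited here as \cite{hampton2005co}---but by a direct analysis of the co-circular equations, not via Theorem~\ref{thm:uniqueness}.)
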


Note that requiring the bodies to lie on a common circle  effectively restricts each body to  a one dimensional manifold, allowing for a more straightforward investigation. This situation is reminiscent of the collinear $ n$-body problem, where the bodies lie on a straight line. 

The first result concerning uniqueness of central configurations was obtained by Moulton for the collinear problem \cite{moulton1910straight}. Moulton  proved that there is a unique collinear central configuration for each ordering of the masses on the line.  A topological  proof of this fact  that uses Morse theory was then provided by Smale \cite{smale1970topologyII}. The approach to uniqueness we take in this paper is inspired by the topological approach of Smale. 

%One of the key lemmas required in Smale's approach shows that the critical points are local minima of the potential restricted to the correct submanifold.   A similar result holds for  the co-circular problem  with $ n $ bodies on a circle, in the case the center of mass coincides with the center of the circle \cite{cors2014uniqueness}.
We will show that, in the four body co-circular problem,  the critical points  of the potential  restricted to a certain subset  (that will be described in a subsequent section) are also local minima of the potential.
This result is general and not restricted to the case where the center of mass coincides with the center of the circle. 
The key idea here is to use Ptolemy's theorem to characterize   co-circular configuration as done by  Cors and Roberts \cite{cors2012four}. Once we know that the critical points are local minima,  Morse theory can be used to prove  Theorem \ref{thm:uniqueness}.

The paper is organized as follows. In Section 2 we introduce the n-body problem and define central configurations. In Section 3 we write central configurations in terms of mutual distances between the bodies.
In Section 4 we define co-circular configurations and find their equations following the approach of Cors and Roberts \cite{cors2012four}. In particular we view such configurations as critical points of the potential restricted to a certain space that we call $ \mathcal{M} ^{ + } $.  
In Section 5 we prove Theorem \ref{thm:uniqueness} using Morse theory. This is done in four steps. In Proposition \ref{prop:crit-points} we show that all the critical points are nondegenerate local minima. 
In Lemma \ref{lem:M+} we show that the space $ \mathcal{M} ^{ + } $ is contractible and we  obtain  its Euler characteristic. In Lemma \ref{lem:uniqueness} we use Morse theory and the Euler characteristic of  $ \mathcal{M} ^{ + } $ to prove that the potential restricted to $ \mathcal{M} ^{ + } $ has a unique critical point. We then use this  last result to prove  Theorem \ref{thm:uniqueness}.

\section{Central Configurations of the Newtonian $n$-body problem}
Let $ P _1,P _2, P _3, \ldots ,P _n $ be   $n$  points in $\mathbb{R}^d$ with position vectors $\mathbf{q} _1 , \mathbf{q} _2 , \ldots ,\mathbf{q} _n  $. Let $ r _{ ij } = \| \mathbf{q} _i - \mathbf{q} _j \| $, be the Euclidean distance between the point $ P _i $ and $ P _j $, and let $\mathbf{r} = (r _{ 12 }, \ldots , r _{ n - 1 \,n }) $ be the vector of mutual distances.    
The  Newtonian $n$-body problem concerns the motion of $n$ particles
with positive masses   $ m _i>0 $ and positions $\mathbf{q} _i\in{\mathbb
R}^d$, where $i=1,\ldots,n$. .
The center of mass of the system is fixed at the origin of the coordinate systems, that is, we have $ \mathbf{q} _{ CM } = \frac{ 1 } { M} \sum _{ i = 1 } ^n m _i \mathbf{q} _i =0$, where $ M = m _1 + \ldots +m _n $ is the total mass. 
The motion is governed by Newton's
law of motion 
\begin{equation} 
 m _i\mathbf{\ddot q} _i=   \sum _{ i \neq j } \frac{m _i m _j (\mathbf{q} _j - \mathbf{q} _i)   } { r _{ ij } ^3 }=\frac{\partial  \tilde U}{  \partial  \mathbf{q}  _i}, \quad 1\leq i\leq n
\end{equation} 
where $\tilde U(\mathbf{q} )$ is the Newtonian potential
\begin{equation} 
    \tilde U( \mathbf{q} )=\sum_{i<j}\frac{m_im_j}{\| \mathbf{q} _i - \mathbf{q} _j \| },\quad 1\leq i\leq n.
\end{equation} 
Let us denote the Newtonian potential by $ U (\mathbf{r}) $ when viewed as a function of $ \mathbf{r} $.  
 A   {\it central configuration} (c.c.) of the $n$-body problem is a configuration $ \mathbf{q} \in \mathbb{R} ^{ nd} $ which satisfies the algebraic equations
%\begin{equation} \label{eqn:cc1}
%    \nabla U (\mathbf{q}) = \lambda M \mathbf{q}  
%\end{equation}
%for some value of $ \lambda $,where $ M = \operatorname{diag} (m _1 , m _1 , m _1 , \ldots, m _n , m _n , m _n) $ 
%is the diagonal mass matrix. 
%\begin{equation} \label{eqn:cc1}
% \lambda\, m _i (\mathbf{q} _i -\mathbf{q} _{ CM })  = \sum _{ i \neq j } \frac{ m _i m _j (\mathbf{q} _j - \mathbf{q} _i) } { r _{ ij } ^3 }, \quad 1 \leq i \leq 4 .% \quad i = 1, \ldots n.
% \end{equation} 
 \begin{equation} \label{eqn:cc1}
 \lambda\, m _i \mathbf{q} _i   = \sum _{ i \neq j } \frac{ m _i m _j (\mathbf{q} _j - \mathbf{q} _i) } { r _{ ij } ^3 }, \quad 1 \leq i \leq n .% \quad i = 1, \ldots n.
 \end{equation}
The central configuration equation  \eqref{eqn:cc1} is invariant under rotations, reflections and dilations.
It is standard to  say that two configurations $ \mathbf{q}  $ and $ \mathbf{q}' $ are {\it equivalent } if there is a non-zero constant $ k \in \mathbb{R}  $ and an orthogonal matrix $ R $ such that $ \mathbf{q} _i ' = k R \mathbf{q} _i $, $ i = 1, \ldots n $. This defines an equivalence relation $\sim'$, and thus  one can speak of  {\it equivalence classes of central configurations}. By convention,  when counting central configurations it is standard to  count the number of equivalence classes with respect to the equivalence relation $\sim'$. This convention is also used in the statement of Conjecture \ref{conj:convex} and of Theorem \ref{thm:uniqueness}.

Let $\tilde  I (\mathbf{q}) $ denote the moment of inertia, that is,
\[ \tilde I (\mathbf{q}) = \frac{1}{2} \sum _{ i = 1 } ^n m _i \| \mathbf{q} _i  \| ^2 %\quad \mbox{where} ~  m' = \sum _{ i = 1 } ^n m _i
\]
and let $ I (\mathbf{r})  = \frac{ 1}{2 M} \sum _{ 1 \leq i < j \leq n } ^n  m _i m _j r _{ ij } ^2 $ denote the moment of inertia when viewed as a function of $ \mathbf{r} $. 

Using the moment of inertia we can write equation \eqref{eqn:cc1} as  
\begin{equation}\label{eqn:cc2}  \nabla _{ \mathbf{q} }  \tilde U (\mathbf{q}) = \lambda\, \nabla _{ \mathbf{q} }  \tilde I (\mathbf{q}),
    \end{equation}
where $ \nabla _{ \mathbf{q} } = 
\left[ \frac{ \partial } { \partial \mathbf{q} _1 } , \ldots , \frac{ \partial } { \partial \mathbf{q} _n }
\right] ^{ T }   $.
   Viewing $ \lambda $  as a Lagrange multiplier, we have that $ \mathbf{q} $ is  a central configuration if and only if it is a critical point  (with respect to $ \mathbf{q} $) of $\tilde U(\mathbf{q} ) $    subject to the constraint $ \tilde I = \tilde I _0 $. 

For any configuration  $\mathbf{q} $ the vectors  $ \mathbf{q} _j $  span a subspace $ \mathcal{C} (\mathbf{q}) $   of $\mathbb{R}  ^{ d } $  called the {\it  centered position space} \cite{Moeckel2015}. It is
natural to define the dimension of a configuration to be $ \operatorname{dim} (
\mathbf{q} ) = \operatorname{dim} (\mathcal{C} (\mathbf{q})) $.
We say that $\mathbf{q}$ is a  {\it Dziobek configurations} if $ \operatorname{dim } (\mathbf{q}) = n - 2 $ \cite{Moeckel2015}. 
In the four-body problem  $\mathbf{q}$ is a  {\it Dziobek central configuration}  if it is a central configuration with $ \operatorname{dim } (\mathbf{q}) = 2 $. The set of  four-body Dziobek configurations coincides with the set of planar, non-collinear, central configurations.

\section{Central  Configurations in terms of distances}
For four bodies it is convenient to  recast the equations defining Dziobek central configuration, so that
the variables are the distances between the particles rather than their coordinates.
 Since the mutual distances determine the configuration up to rotation and reflection symmetry, this choice not only reduces the number of variables but also  removes  the rotational  and reflectional degeneracy. The dilational degeneracy can then be eliminated by fixing the size of the configuration with the restriction $ I = 1 $. 
% The moment of inertia  is a natural measure of the size and setting $I=1$ is a popular normalization. 

Let $ \mathbf{r}= (r _{ 12 } , r _{ 13 } , r _{ 14 } , r _{ 23 } , r _{ 24 } , r _{ 34 }) \in (\mathbb{R}  ^{ + }) ^{ 6 } $ be a vector of  non-negative mutual distances, and let  the  Cayley--Menger determinant of four points  $ P _1 , \ldots P _4 $   be 
\[   H (\mathbf{r} ) = 288 V ^2 =  \begin{vmatrix}
            0 & 1 & 1 & 1 & 1 \\
            1 & 0 & r^2 _{ 12 } & r^2 _{ 13 } & r^2 _{ 14 }  \\
            1 & r ^2_{ 12 } & 0 & r^2 _{ 23 } & r^2 _{ 24 } \\
            1 & r ^2_{ 13 } & r^2 _{ 23 } & 0 & r^2 _{ 34 } \\
            1 & r^2 _{ 14 } & r^2 _{ 24 } & r^2 _{ 34 } & 0
        \end{vmatrix}. 
\]
where $ V $ is the volume of the configuration. 
Not all vectors $ \mathbf{r} $  realize  actual configurations of four bodies in  $ 
\mathbb{R}  ^3 $ (see \cite{cors2012four} for some examples). A necessary and sufficient condition for a given vector $\mathbf{r}$ to correspond to an actual configuration of four bodies is that $ H (\mathbf{r}) \geq 0 $ and all strict triangle inequalities be satisfied.  As a consequence we  consider the sets 
\[ \mathcal{G} = \{\mathbf{r} \in (\mathbb{R}  ^{ + }) ^6 |\, H (\mathbf{r}) \geq 0 \mbox{ and }  r _{ ij } + r _{ jk } 
    >  r _{ ik } \mbox{ for all } (i, j, k) \mbox{ where } i \neq j \neq k  \}.
\]
and 
%\[\mathcal{N} = \{ \mathbf{r} \in \mathcal{G} |\, I (\mathbf{r}) - I _0 = 0 , \quad H (\mathbf{r}) = 0 \}   \]
\[\mathcal{N} = \{ \mathbf{r} \in \mathcal{G} |\, I (\mathbf{r}) - 1 = 0 , \quad H (\mathbf{r}) = 0 \}   \]
We say that a  vector of mutual distances  $\mathbf{r}$   is {\it geometrically realizable}  if $ \mathbf{r} \in \mathcal{G} $ and that $\mathbf{r}$   is a {\it normalized Dziobek configuration}  if $ \mathbf{r} \in \mathcal{N} $. 

Thus we have the following characterization of planar four body central configurations given by Dziobek:

\begin{proposition} 

Let $ \mathbf{q}  $ be a  Dziobek configuration, let  $ \mathbf{r} \in \mathcal{N}  $ be  its corresponding normalized Dziobek configuration, and let  $U  |_{ \mathcal{N} }: \mathcal{N} \to \mathbb{R}   $ be  the restriction of the Newtonian potential $ U $   to $ \mathcal{N} $.  Then, $\mathbf{q}$ is a Dziobek central configuration if and only if $ \mathbf{r} $ is a critical point of  $U  |_{ \mathcal{N} } $ with respect to $ \mathbf{r} $.
%the restriction of the Newtonian potential $ U (\mathbf{r}) $   to $ \mathcal{N} $.
%A configuration vector $\mathbf{q}  \in \mathbb{R}  ^8 $  is a central configuration if and
%only if the corresponding normalized Dziobek configuration  $ \mathbf{r}\in (\mathbb{R} ^{ 6 }) ^{ + } $  is a critical point of $U(\mathbf{r})  |_{ \mathcal{N} } $, the restriction of the 
%Newtonian potential $ U (\mathbf{r}) $   to $ \mathcal{N} $.
% The distance vector $ \mathbf{r} \in (\mathbb{R}^+) ^6   $ corresponds to a planar central configuration of four bodies if and only if $ \mathbf{r} $ is a critical point of  the restriction of $ U $  to the set $ \mathcal{N} $, which we denote $ U|_{ \mathcal{N} }  $. 
\end{proposition}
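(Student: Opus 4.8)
The plan is to translate the constrained critical point equations on $\mathcal{N}$ back into the position-space central configuration equations \eqref{eqn:cc1}, using the Lagrange multiplier characterization already established in \eqref{eqn:cc2}. We work with the two functions $U(\mathbf{r})$ and $I(\mathbf{r})$ together with the Cayley--Menger constraint $H(\mathbf{r})=0$, so there will be two multipliers. First I would recall the classical facts, going back to Dziobek and Laura, about the gradients of the relevant functions with respect to the squared distances $s_{ij}=r_{ij}^2$: one has $\partial U/\partial s_{ij}=-\tfrac{1}{2}m_im_j r_{ij}^{-3}$, $\partial I/\partial s_{ij}=\tfrac{1}{2M}m_im_j$, and $\partial H/\partial s_{ij}=-16\,A_iA_j$ where $A_i$ is (up to sign) the oriented area of the triangle formed by the three bodies other than $P_i$. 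The precise normalizations are a routine computation and I will not grind through them, but they are what makes the argument work.

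The main step is the following equivalence. For $\mathbf{r}\in\mathcal{N}$ (so $H=0$ and $I=1$), being a critical point of $U|_{\mathcal{N}}$ means, by Lagrange multipliers, that there exist $\lambda,\mu\in\mathbb{R}$ with $\nabla_{\mathbf{r}}U=\lambda\,\nabla_{\mathbf{r}}I+\mu\,\nabla_{\mathbf{r}}H$, i.e.\ for every pair $i<j$,
\begin{equation}\label{eqn:dziobek-eq}
-\frac{1}{2}\,\frac{m_im_j}{r_{ij}^3}=\frac{\lambda}{2M}\,m_im_j-16\,\mu\,A_iA_j .
\end{equation}
Dividing by $m_im_j$, this is precisely the Dziobek form of the central configuration equations: $r_{ij}^{-3}-c = \sigma\,\dfrac{A_iA_j}{m_im_j}$ for suitable constants $c,\sigma$. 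On the other side, I would recall that for a planar (Dziobek) configuration $\mathbf{q}$, the central configuration equations \eqref{eqn:cc1} are equivalent to exactly these relations: the standard derivation uses the fact that the four bodies are affinely dependent, so there is a vector $(\Delta_1,\dots,\Delta_4)$, unique up to scaling, with $\sum_i\Delta_i\mathbf{q}_i=0$ and $\sum_i\Delta_i=0$, and one can take $\Delta_i=\pm A_i$; plugging this dependency into \eqref{eqn:cc1} and eliminating $\mathbf{q}$ yields \eqref{eqn:dziobek-eq}. This last implication — that \eqref{eqn:cc1} for a two-dimensional configuration is equivalent to the distance equations \eqref{eqn:dziobek-eq} — is the classical theorem of Dziobek, and I would either cite it (e.g.\ from \cite{Moeckel2015}) or reproduce the short linear-algebra argument.

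The one genuine subtlety, and the step I expect to be the main obstacle, is matching the gradient computed \emph{intrinsically on the constraint surface} $\mathcal{N}$ with the gradient computed in position space, and checking that no spurious solutions are introduced or lost. Concretely: (a) one must verify that the six squared distances are valid local coordinates near a Dziobek configuration, equivalently that $dH$ and $dI$ are linearly independent there, so that $\mathcal{N}$ is a smooth manifold and the Lagrange multiplier rule applies cleanly; the Dziobek (codimension-two, i.e.\ planar non-collinear) hypothesis is exactly what guarantees this, since at a collinear or three-dimensional configuration the area factors $A_i$ degenerate. (b) One must confirm that the map $\mathbf{q}\mapsto\mathbf{r}$ restricts to a bijection between equivalence classes of Dziobek configurations and points of $\mathcal{N}$ — rotations, reflections and the scaling fixing $I=1$ account for exactly the quotient — so that ``critical point of $U|_{\mathcal{N}}$'' and ``Dziobek central configuration'' refer to corresponding objects. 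Both (a) and (b) are standard but deserve an explicit sentence. Once they are in place, the proposition follows by reading \eqref{eqn:dziobek-eq} in both directions: a critical point of $U|_{\mathcal{N}}$ satisfies the Dziobek distance equations, hence (by Dziobek's theorem) comes from a central configuration; conversely a Dziobek central configuration yields, after normalizing $I=1$, a point of $\mathcal{N}$ satisfying \eqref{eqn:dziobek-eq}, hence a critical point of $U|_{\mathcal{N}}$. \qed
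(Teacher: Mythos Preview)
The paper does not actually prove this proposition: it is stated as the classical characterization ``given by Dziobek'' and no proof environment follows. Your outline is the standard route and is essentially correct --- set up the two-constraint Lagrange problem on $\mathcal{N}$, use the classical identity $\partial H/\partial r_{ij}^2 = c\,A_iA_j$ (with $A_i$ the signed triangle areas) to rewrite the multiplier equations in Dziobek's form, and then invoke the affine-dependence argument to pass between the distance equations and the position-space equations \eqref{eqn:cc1}. The two caveats you single out (that $\nabla I$ and $\nabla H$ are independent precisely at planar non-collinear configurations, so the Lagrange rule applies; and that $\mathbf{q}\mapsto\mathbf{r}$ identifies equivalence classes with points of $\mathcal{N}$) are exactly the checks needed, and both are routine. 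In short, there is nothing to compare against in the paper, and your sketch is the expected classical proof.
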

Since equations \eqref{eqn:cc1} are invariant under rotations,  dilations and reflections in the plane, we can consider two relative equilibria as equivalent if they are related by these symmetry operations. This  defines an equivalence relation $\sim$, different from the more standard one introduced in section 2.
Let $X$  be the set of equivalence classes with respect to $\sim$,  then the set of equivalence classes  
$ X $ is in a one-to-one correspondence with the set  $ c (U|_{\mathcal{N}})$ of critical points of the function $ U (\mathbf{r})  |_{\mathcal{N}} $.

%the set of critical points  normalized Dziobek configuration  $ \mathbf{r}\in (\mathbb{R} ^{ 6 }) ^{ + } $ which is a critical point of $U(\mathbf{r})  |_{ \mathcal{N} } $.  Clearly 

%ois in one to one corrsepondence with 

%Let $ \mathbf{q} $ be a  central configuration, then there exist a  corresponding normalized Dziobek configuration  $ \mathbf{r}\in (\mathbb{R} ^{ 6 }) ^{ + } $ which is a critical point of $U(\mathbf{r})  |_{ \mathcal{N} } $. 

%%%%%%%%%%%%%%%%%%%%%%%%%%%%%%%%%%%%%%%
\section{Co-circular Central Configurations}
%%%%%%%%%%%%%%%%%%%%%%%%%%%%%%%%%%%%%%%
In this section we study co-circular central configurations. 
For a planar configuration we  say that the bodies 
are {\it ordered counterclockwise} ({\it clockwise}) if they are numbered consecutively while traversing the boundary of the quadrilateral in a countercklockwise (clockwise) direction. 
Since we use mutual distances as coordinates, we cannot distinguish between bodies  ordered counterclockwise and bodies ordered  clockwise. Hence, we introduce the following terminology: we say that the bodies are {\it ordered sequentially }   if they are numbered consecutively while traversing the boundary of the quadrilateral in any direction.  
%We say that the bodies are {\it ordered sequentially } if they are numbered consecutively while traversing the boundary of the quadrilateral. 

 Without loss of generality, we may assume that any cyclic quadrilateral is ordered sequentially so that $ r _{ 13 } $ and $ r _{ 24 } $ are the lengths of the diagonals. This is justified because we can always relabel the bodies so that they are ordered sequentially.
Denote
\[ P (\mathbf{r}) = r _{ 12 } r _{ 34 } + r _{ 14 } r _{ 23 } - r _{ 13 } r _{ 24 }.  \]
Ptolemy's theorem states that if a quadrilateral is  sequentially ordered  and cyclic then   $ P = 0 $.  More  in general,  Ptolemy's inequality says   that $ P \geq 0 $  for any   convex quadrilateral ordered sequentially and  for any  tetrahedron \cite{apostol1967ptolemy}. Equality holds if and only if the four bodies are co-circular. 
%Ptolemy's theorem can be formulated by saying  that  $ P ( \mathbf{r})  \geq 0 $ for any  $\mathbf{r} \in \mathcal{G} $. Equality holds if and only if the four bodies lie on a circle.
Let  $ \mathcal{P} $ be the set of geometrically realizable $\mathbf{r}$ satisfying $ P (\mathbf{r}) = 0 $, that is 
\[\mathcal{P} = \{ \mathbf{r} \in \mathcal{G} |\, P (\mathbf{r}) = 0 \}, \]
let $ \mathcal{M} $ be the set of $ \mathbf{r} $ which satisfy $ I = 1 $ and $ P = 0 $ 
\[ \mathcal{M} = \{ \mathbf{r} \in \mathbb{R}  ^6 | I (\mathbf{r})   - 1 = 0, \quad P (\mathbf{r}) = 0
    \}, \]
and let $ \mathcal{M} ^{ + } $ be the set obtained from $ \mathcal{M} $ by reintroducing the restrictions on the mutual distances 
\[ \mathcal{M}^+ = \{ \mathbf{r} \in (\mathbb{R}^+)  ^6 | I (\mathbf{r})   - 1 = 0, \quad P (\mathbf{r}) = 0 
    \}. \]
It is easy to see that $ \mathcal{M} $ is a smooth four-dimensional manifold, since the gradients of  
$ I $ and $ P $ are independent, and that $ \mathcal{M} ^{ + } $ is a manifold with boundary. 
In a later section we will show that $ \mathcal{M}\approx S ^2 \times S ^2 $ and that $ \mathcal{M} ^{ + } $ is homeomorphic to a closed ball. We denote the boundary  of $ \mathcal{M} ^{ + } $  by  $ \partial \mathcal{M} ^{ + } $ and observe that $U|_{\partial \mathcal{M} + } = \infty $. 

One last set that will play an important role in this paper is  $ \mathcal{D} $, which is  defined as follows:
\begin{align*} \mathcal{D}  & =  \{ \mathbf{r} \in  \mathcal{G }| I( \mathbf{r}) - 1 = 0 , P (\mathbf{r}) = 0 ,    H (\mathbf{r}) =0  \} \\ & = \{ \mathbf{r} \in \mathcal{M} ^{ + } \cap \mathcal{G }| H (\mathbf{r}) =0  \} .
\end{align*} 
%The set 
%$ \mathcal{D} $ will play an important role, because we will use it to define co-circular central configurations. 
%However, there are two other sets that are very important.

There is an interesting relationship between the conditions $ P (\mathbf{r}) = 0 $ and  $ H (\mathbf{r}) = 0 $. The following Lemma sheds some light on this relationship, further insight is given in Lemma \ref{lem:grad}. 
\begin{lemma}\label{lem:h} If $ \mathbf{r} \in \mathcal{P} $, then $ H (r) = 0 $.  In other words on the set of geometrically realizable vectors for which $ P = 0 $   the configuration of four bodies  is coplanar. Moreover, we have that $ \mathcal{D} = \mathcal{M} ^{ + } \cap \mathcal{G} $.      
\end{lemma}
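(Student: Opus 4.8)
The plan is to extract both claims from the equality case of Ptolemy's inequality recalled just above, read in the language of mutual distances. For the first assertion, let $\mathbf{r}\in\mathcal{P}$. By definition $\mathbf{r}\in\mathcal{G}$, so $H(\mathbf{r})\ge 0$ and there is an honest configuration $P_1,\dots,P_4\in\mathbb{R}^3$ realizing the distances $\mathbf{r}$; since all triangle inequalities are strict, no three of the $P_i$ are collinear. I would then argue by contradiction: if $H(\mathbf{r})>0$, then $V>0$, so the four points span $\mathbb{R}^3$ and form a non-degenerate tetrahedron. Applying Ptolemy's inequality to this tetrahedron gives $P(\mathbf{r})\ge 0$, with equality only if the four bodies are co-circular; but four co-circular points lie in a common plane, forcing $V=0$, a contradiction. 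Hence a genuine tetrahedron has $P(\mathbf{r})>0$, contradicting $\mathbf{r}\in\mathcal{P}$, and therefore $H(\mathbf{r})=0$, i.e.\ the realizing configuration is coplanar. (Equivalently and more directly: $P(\mathbf{r})=0$ together with the equality case of Ptolemy's inequality says the configuration is co-circular, hence coplanar, hence $H(\mathbf{r})=288V^2=0$.)

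The ``moreover'' part is then bookkeeping with the definitions, using the first assertion. Since $\mathcal{G}\subseteq(\mathbb{R}^+)^6$ we have
\[
\mathcal{M}^+\cap\mathcal{G}=\{\mathbf{r}\in\mathcal{G}\mid I(\mathbf{r})-1=0,\ P(\mathbf{r})=0\},
\]
and every such $\mathbf{r}$ lies in $\mathcal{P}$, so by the first assertion $H(\mathbf{r})=0$, whence $\mathbf{r}\in\mathcal{D}$; this gives $\mathcal{M}^+\cap\mathcal{G}\subseteq\mathcal{D}$. The reverse inclusion $\mathcal{D}\subseteq\mathcal{M}^+\cap\mathcal{G}$ is immediate from the second displayed description of $\mathcal{D}$. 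Hence $\mathcal{D}=\mathcal{M}^+\cap\mathcal{G}$.

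There is no deep obstacle here; the entire content sits in the equality case of Ptolemy's inequality. The one subtlety I would be careful to spell out is that $\mathcal{G}$ contains non-convex planar quadrilaterals and honest tetrahedra as well as cyclic ones, so the ``convex quadrilateral'' form of the inequality cannot be quoted directly for an arbitrary $\mathbf{r}$; this is exactly why I would invoke the tetrahedral form, for which a non-coplanar configuration has strictly positive Ptolemy defect, so that $P=0$ by itself rules out non-coplanarity. No convexity or cyclic-ordering assumption is needed at this stage --- the convention fixing $r_{13}$ and $r_{24}$ as the diagonals only becomes relevant afterwards, once one already knows the configuration is a cyclic quadrilateral.
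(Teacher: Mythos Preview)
Your argument is correct, and the ``moreover'' part is handled cleanly. The route you take, however, is genuinely different from the paper's.

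The paper proves the lemma algebraically, via Pech's factorization
\[
\tfrac{1}{2}H(\mathbf{r}) = P(\mathbf{r})\,Q(\mathbf{r}) - K(\mathbf{r})^2,
\]
so that $P=0$ forces $\tfrac{1}{2}H = -K^2 \le 0$, and then $H\ge 0$ on $\mathcal{G}$ gives $H=0$ (and incidentally $K=0$). You instead argue geometrically: a point of $\mathcal{G}$ with $H>0$ is a non-degenerate tetrahedron, Ptolemy's inequality for tetrahedra is strict (equality would force co-circularity, hence coplanarity), so $P>0$; contraposing, $P=0$ in $\mathcal{G}$ implies $H\le 0$, hence $H=0$. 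Your care about only invoking the tetrahedral form of Ptolemy, rather than the convex-quadrilateral form, is exactly right and closes the only potential gap.

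What each approach buys: yours is shorter and uses only the statement of Ptolemy already quoted in the paper, with no new identity needed. The paper's approach, on the other hand, is not just proving this lemma in isolation: the Pech identity and the by-product $K(\mathbf{r})=0$ are precisely what drive the next step (Lemma~\ref{lem:grad}), where one differentiates $\tfrac{1}{2}H = PQ - K^2$ to get $\nabla_{\mathbf{r}} H = 2Q\,\nabla_{\mathbf{r}} P$ on $\mathcal{P}$. If you proceed with your geometric proof here, you will still need to introduce that identity (or an equivalent device) for the gradient lemma.
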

  \begin{proof}
Pech showed \cite{10.1007/978-3-642-21898-9_34} that the Cayley-Menger determinant can be written as follows 
\[\frac{1}{2} H (\mathbf{r}) =  P (\mathbf{r}) \cdot  Q (\mathbf{r})   -K ^2 (\mathbf{r})   \]
where 
\begin{align*} Q (\mathbf{r})   =&  r _{ 12 } r _{ 34 } (- r _{ 12 } ^2 - r _{ 34 } ^2 + r _{ 23 } ^2 + r _{ 14 } ^2 + r _{ 13 } ^2 + r _{ 24 } ^2) \\ &  + r _{ 14 } r _{ 23 } (r _{ 12 } ^2 r _{ 34 } ^2 - r _{ 23 } ^2 - r _{ 14 } ^2 + r _{ 13 } ^2+ r _{ 24 } ^2) \\& - r _{ 13 } r _{ 24 } (r _{ 12 } ^2 + r _{ 34 } ^2 + r _{ 23 } ^2 + r _{ 14 } ^2 - r _{ 13 } ^2 - r _{ 24 } ^2  ) 
\end{align*} 
and 
\[K (\mathbf{r})  = r_{12} r_{13} r_{23} - r_{12} r_{14} r_{24} + r_{13} r_{14} r_{34} - r_{23} r_{24} r_{34}.\]
If $ P  = 0 $, then 
\[
    \frac{1}{2} H (\mathbf{r}) = - (K (\mathbf{r})) ^2 \leq 0.   
\]
Since $ \mathbf{r} \in \mathcal{G} $ implies $ H (\mathbf{r}) \geq 0 $ it follows that $ K (\mathbf{r}) =  H (\mathbf{r}) = 0 $.   
\end{proof} 
A similar relationship  exists between $ H (\mathbf{r}) = 0 $ and  the condition
required of  four points to form a trapezoid.  This relationship was exploited in \cite{santoprete2018four} to obtain equations for trapezoidal central configurations. 

Since co-circular central configurations are Dziobek configuration we can give the following definition
\begin{definition}
   The configuration vector $ \mathbf{q} $ is a sequentially ordered cyclic  four-body  central configuration  if and only if its corresponding  distance vector $ 
   \mathbf{r} $   belongs to $ \mathcal{D} $ and it  is a critical point of  $ U|_{\mathcal{N}} $ with respect to $ \mathbf{r} $. 
  \end{definition} 
In terms of  Lagrange multipliers this means that $ \mathbf{r} \in \mathcal{D}  $ is a sequentially ordered cyclic four body central configuration if and only if it is a critical point of the function
\[
      U (\mathbf{r})   + \lambda M (I (\mathbf{r})   - 1) + \eta H (\mathbf{r}), 
\]
satisfying $ I - 1 = 0 $, $ P= 0 $ and $ H = 0 $, where $ \lambda $, and $ \eta $ are Lagrange multipliers.  The downside of this approach is that $ H $ and its derivatives are fairly complicated. Using the following lemma  however, it is possible to find simpler equations for the co-circular configurations. This  lemma was proven  in \cite{cors2012four}, here we provide a different proof. 
\begin{lemma} \label{lem:grad}
    For any $\mathbf{r} \in \mathcal{P} $
    \[
        \nabla _{ \mathbf{r} }  H (\mathbf{r}) =2 Q (\mathbf{r})  \nabla _{ \mathbf{r} } P  (\mathbf{r}) %\left( \frac{ 4 } { r _c ^2} \prod_{i<j} r _{ ij } \right)  \nabla _{ \mathbf{r} }  P (\mathbf{r}),  
    \]
    where  $ \nabla _{ \mathbf{r} } = 
    \left[  \frac{ \partial } { \partial r _{ 12 } } , \ldots , \frac{ \partial } { \partial r _{ 34 } }
    \right] ^{ T }   $. In other words on the set of geometrically realizable vectors for which $ P $ vanish,  the gradients of $ H $ and $ P $ are parallel.   
\end{lemma}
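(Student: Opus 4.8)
The plan is to prove the gradient identity by differentiating Pech's factorization of the Cayley--Menger determinant directly and then restricting to the set $\mathcal{P}$. Recall from Lemma~\ref{lem:h} that $\frac{1}{2}H(\mathbf{r}) = P(\mathbf{r})\, Q(\mathbf{r}) - K^2(\mathbf{r})$ as an identity of polynomials. Differentiating with respect to an arbitrary distance variable $r_{ij}$ gives
\[
    \frac{1}{2}\,\frac{\partial H}{\partial r_{ij}} = \frac{\partial P}{\partial r_{ij}}\, Q + P\, \frac{\partial Q}{\partial r_{ij}} - 2K\, \frac{\partial K}{\partial r_{ij}},
\]
so that $\nabla_{\mathbf{r}} H = 2Q\,\nabla_{\mathbf{r}} P + 2P\,\nabla_{\mathbf{r}} Q - 4K\,\nabla_{\mathbf{r}} K$ as an identity of polynomial vector fields. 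Now I would restrict to $\mathbf{r}\in\mathcal{P}$: by definition $P(\mathbf{r}) = 0$, which kills the middle term immediately, and by Lemma~\ref{lem:h} we also have $K(\mathbf{r}) = 0$ on $\mathcal{P}$, which kills the last term. What remains is exactly $\nabla_{\mathbf{r}} H(\mathbf{r}) = 2Q(\mathbf{r})\,\nabla_{\mathbf{r}} P(\mathbf{r})$, which is the claim.

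The one point that needs care — and which I expect to be the only real obstacle — is the justification that $K$ vanishes identically on $\mathcal{P}$, not merely that $K^2$ does. But this is precisely what was extracted in the proof of Lemma~\ref{lem:h}: on $\mathcal{G}$ one has $H(\mathbf{r})\geq 0$, while $P = 0$ forces $\frac{1}{2}H = -K^2 \leq 0$, so $H = 0$ and hence $K = 0$ there. So I can simply cite Lemma~\ref{lem:h} for the two facts $P = 0$ and $K = 0$ on $\mathcal{P}$, and the rest is a one-line substitution into the differentiated identity.

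Two remarks on robustness of this argument. First, nothing here requires $\mathbf{r}$ to be a regular point of any constraint or to lie on $\mathcal{N}$ or $\mathcal{M}$; the identity is purely algebraic and holds at every point of $\mathcal{P}$, which is what the statement asserts. Second, the conclusion "the gradients of $H$ and $P$ are parallel" is the useful geometric content: it says that on $\mathcal{P}$ the level set $\{H = 0\}$ and the level set $\{P = 0\}$ are tangent, so that imposing $H = 0$ adds no independent constraint beyond $P = 0$. This is what will later allow us to replace the complicated constraint $H = 0$ (with its unwieldy derivatives) by the simple quadratic-in-products constraint $P = 0$ when writing down the equations for co-circular central configurations, which is the whole reason the lemma is being proved.
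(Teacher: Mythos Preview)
Your proof is correct and is essentially identical to the paper's own argument: differentiate Pech's identity $\tfrac{1}{2}H = PQ - K^2$, then use $P=0$ by definition of $\mathcal{P}$ and $K=0$ from Lemma~\ref{lem:h} to eliminate the two unwanted terms. The paper presents exactly this computation, so there is nothing to add.
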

\begin{proof} 
Since $ \frac{1}{2} H (\mathbf{r}) =  P (\mathbf{r}) \cdot  Q (\mathbf{r})   -K ^2 (\mathbf{r})   $ we have that 
\[
     \frac{1}{2}  \nabla _{ \mathbf{r} } H (\mathbf{r}) =    Q (\mathbf{r}) \nabla _{ \mathbf{r} } P  (\mathbf{r})  + P  (\mathbf{r})   \nabla _{ \mathbf{r} }  Q (\mathbf{r})  -2K  (\mathbf{r})  \nabla _{ \mathbf{r} }K (\mathbf{r}).
\]
Since $ \mathbf{r} \in \mathcal{P} $, then $ H = P = 0 $. It follows that $ K = 0 $ as well. Hence,
$  \frac{1}{2}  \nabla _{ \mathbf{r} } H (\mathbf{r}) =   Q (\mathbf{r})  \nabla _{ \mathbf{r} } P  (\mathbf{r}) $. 
\end{proof}
Comparing this result with the corresponding lemma in \cite{cors2012four} we find that if $ \mathbf{r} \in \mathcal{P}  $   then 
\[ 2 Q (\mathbf{r}) =  \left( \frac{ 4 } { r _c ^2} \prod_{i<j} r _{ ij } \right) \neq 0,  \]
where $ r _c $ is the circumradius of the cyclic quadrilateral.

%Note that for the Lemma above to be verified we  need $ \mathbf{r} $ to be geometrically realizable, and to satisfy $ P (\mathbf{r}) = 0 $. But this also means that $ H (\mathbf{r}) = 0 $, by Lemma \ref{lem:h}.
We now have the following characterization of co-circular configurations due to Cors and Roberts (\cite{cors2012four}):
\begin{proposition}\label{prop:gradients}
    Let  $ \mathbf{r} \in \mathcal{D} $, then $\mathbf{r}$  is a critical point of  $  U|_{ \mathcal{N} } $, the restriction of $  U $ to $ \mathcal{N} $,  if and only if  $ \mathbf{r} $ is  a critical point of the function $  U| _{ \mathcal{M} ^{ + } }: \mathcal{M} ^{ + }  \to \mathbb{R}  $.
     Therefore the vector $ \mathbf{q} $ is a sequentially ordered cyclic four-body c.c. if and only if  the corresponding distance vector  $ \mathbf{r}    \in \mathcal{D}  $  is a critical point of the Lagrangian function
    \[L (\mathbf{r} ; \lambda, \sigma ) =  U (\mathbf{r})   + \lambda M \,(I (\mathbf{r})   - 1) + \sigma P (\mathbf{r})   \]
   satisfying $ I - 1 = 0 $, $ P= 0 $ and $ H = 0 $, where $ \lambda $ and $ \sigma $ are Lagrange multipliers.
   % Therefore the vector $ \mathbf{r}    \in \mathcal{D}  $ is a   sequentially ordered cyclic four-body  central configuration  if and only if    $  \mathbf{r}  $  is a critical point of the Lagrangian function
    %\[L (\mathbf{r} ; \lambda, \sigma ) =  U + \lambda M (I - 1) + \sigma P \]
   %satisfying $ I - 1 = 0 $, $ P= 0 $ and $ H = 0 $, where $ \lambda $ and $ \sigma $ are Lagrange multipliers.  
\end{proposition}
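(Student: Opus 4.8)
The plan is to derive Proposition~\ref{prop:gradients} from Lemma~\ref{lem:grad}, which tells us that along $\mathcal{D}$ the gradients of $H$ and $P$ are parallel. Geometrically this means the two constraint surfaces $\mathcal{N}$ (cut out by $I-1=0$, $H=0$) and $\mathcal{M}^+$ (cut out by $I-1=0$, $P=0$) are tangent to one another at every point of $\mathcal{D}$, so a smooth function restricted to one has exactly the same critical points on $\mathcal{D}$ as when restricted to the other.

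First I would fix $\mathbf{r}\in\mathcal{D}$ and collect the non-degeneracy facts. Since $\mathcal{D}\subset\mathcal{P}$, Lemma~\ref{lem:grad} gives $\nabla_\mathbf{r}H(\mathbf{r})=2Q(\mathbf{r})\,\nabla_\mathbf{r}P(\mathbf{r})$; by the identity recorded just after Lemma~\ref{lem:grad}, $2Q(\mathbf{r})=\frac{4}{r_c^2}\prod_{i<j}r_{ij}\neq0$, because every $r_{ij}>0$ on $\mathcal{G}$ and the circumradius $r_c$ is finite and positive. Together with the fact that $\nabla P\neq0$ on $\mathcal{M}$ (part of $\mathcal{M}$ being a smooth four-manifold, which uses that $\nabla I$ and $\nabla P$ are independent there), this shows $\nabla H$ and $\nabla P$ are parallel nonzero vectors at $\mathbf{r}$, hence $\{\nabla I(\mathbf{r}),\nabla H(\mathbf{r})\}$ and $\{\nabla I(\mathbf{r}),\nabla P(\mathbf{r})\}$ span the same $2$-plane and the former pair is again independent. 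Consequently both $\mathcal{N}$ and $\mathcal{M}^+$ are smooth codimension-two submanifolds near $\mathbf{r}$ with $T_\mathbf{r}\mathcal{N}=\ker dI(\mathbf{r})\cap\ker dH(\mathbf{r})=\ker dI(\mathbf{r})\cap\ker dP(\mathbf{r})=T_\mathbf{r}\mathcal{M}^+$; and $\mathbf{r}$ lies in the interior of the manifold-with-boundary $\mathcal{M}^+$, since $U(\mathbf{r})<\infty$ while $U|_{\partial\mathcal{M}^+}=\infty$.

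Next I would invoke the elementary fact that a point is a critical point of a smooth function restricted to a submanifold precisely when the (Euclidean) gradient of the function is normal to the submanifold there. Since $T_\mathbf{r}\mathcal{N}=T_\mathbf{r}\mathcal{M}^+$, the point $\mathbf{r}$ is a critical point of $U|_{\mathcal{N}}$ iff $\nabla U(\mathbf{r})\perp T_\mathbf{r}\mathcal{N}$ iff $\nabla U(\mathbf{r})\perp T_\mathbf{r}\mathcal{M}^+$ iff $\mathbf{r}$ is a critical point of $U|_{\mathcal{M}^+}$; in Lagrange-multiplier form this is the statement that $\nabla U=\lambda M\nabla I+\eta\nabla H$ for some $\lambda,\eta$ if and only if $\nabla U=\lambda M\nabla I+\sigma\nabla P$ for some $\lambda,\sigma$, with $\sigma=2\eta Q(\mathbf{r})$. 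This is the first assertion. The second then follows by combining it with the Definition preceding the proposition: $\mathbf{q}$ is a sequentially ordered cyclic four-body c.c. iff $\mathbf{r}\in\mathcal{D}$ is a critical point of $U|_{\mathcal{N}}$, hence iff $\mathbf{r}\in\mathcal{D}$ is a critical point of $U|_{\mathcal{M}^+}$, which — applying Lagrange multipliers to $\mathcal{M}^+$, whose defining functions $I-1$ and $P$ have independent gradients — is exactly the condition that $\mathbf{r}\in\mathcal{D}$ be a critical point of $L(\mathbf{r};\lambda,\sigma)=U+\lambda M(I-1)+\sigma P$ subject to $I-1=0$, $P=0$, $H=0$ (the constraint $H=0$ being automatic on $\mathcal{D}$ by Lemma~\ref{lem:h} and kept only to record that $\mathbf{r}\in\mathcal{D}$).

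I expect the only delicate point to be the non-degeneracy bookkeeping: one must make sure that $\mathcal{N}$ really is a smooth surface near $\mathbf{r}$ — equivalently that $\nabla H(\mathbf{r})\neq0$ and is independent of $\nabla I(\mathbf{r})$ — so that the Lagrange-multiplier description of critical points of $U|_{\mathcal{N}}$ is legitimate; this is precisely what $Q(\mathbf{r})\neq0$ and the independence of $\nabla I,\nabla P$ supply. A minor secondary point is verifying that $\mathbf{r}$ is an interior point of $\mathcal{M}^+$, so that ``critical point of $U|_{\mathcal{M}^+}$'' is unambiguous; this is immediate from $U|_{\partial\mathcal{M}^+}=\infty$. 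Everything else in the argument is formal.
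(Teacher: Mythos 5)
Your proof is correct and follows essentially the same route as the paper: both arguments use Lemma~\ref{lem:grad} to conclude that $T_{\mathbf{r}}\mathcal{N}=T_{\mathbf{r}}\mathcal{M}^{+}$ at points of $\mathcal{D}$, so the projections of $\nabla_{\mathbf{r}}U$ onto the two tangent spaces agree and the critical points coincide, after which the Lagrange-multiplier formulation is immediate. Your explicit check that $2Q(\mathbf{r})\neq 0$ (so the two gradient pairs span the same plane and $\mathcal{N}$ is genuinely a smooth codimension-two submanifold near $\mathbf{r}$) is a point the paper records just after Lemma~\ref{lem:grad} but leaves implicit in the proof itself; making it explicit is a small improvement, not a different argument.
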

\begin{proof}
 Recall that $\nabla _{ \mathbf{r} } U |_{\mathcal{M} ^{ + } }$ is the orthogonal projection of $ \nabla _{ \mathbf{r} } U (\mathbf{r})   $ onto  the tangent space $ T _{ \mathbf{r}}( \mathcal{M} ^{ + }) $, and similarly   $\nabla _{ \mathbf{r} } U |_{\mathcal{N} }$ is the orthogonal projection of $ \nabla _{ \mathbf{r} } U (\mathbf{r})   $ onto  the tangent space $ T _{ \mathbf{r}}( \mathcal{N} ) $, with  \[
     T _{ \mathbf{r} } \mathcal{N} = \{ \mathbf{v} \in   \mathbb{R}   ^{ 6 }\, |\, \nabla _{ \mathbf{r} } (I (\mathbf{r}) - 1 ) \cdot  \mathbf{v} = 0, \nabla _{ \mathbf{r} } H (\mathbf{r}) \cdot \mathbf{v}  = 0 \}  
 \]
and 
 \[
     T _{ \mathbf{r} } \mathcal{M} ^{ + } = \{ \mathbf{v} \in  \mathbb{R}   ^{ 6 } \, |\, \nabla _{ \mathbf{r} } (I (\mathbf{r}) - 1 ) \cdot  \mathbf{v} = 0, \nabla _{ \mathbf{r} } P (\mathbf{r}) \cdot \mathbf{v}  = 0 \}.
 \]
 Since $ \mathbf{r} \in \mathcal{D} $, by Lemma \ref{lem:grad}, $   \nabla _{ \mathbf{r} }  H (\mathbf{r}) =2 Q (\mathbf{r})  \nabla _{ \mathbf{r} } P  (\mathbf{r}) $. It follows that, if $ \mathbf{r} \in \mathcal{D} $, then 
$ T _{ \mathbf{r} } \mathcal{M} ^{ + } =  T _{ \mathbf{r} } \mathcal{N }  $, and hence 
 $\nabla _{ \mathbf{r} } U |_{\mathcal{N} }= \nabla _{ \mathbf{r} } U |_{\mathcal{M} ^{ + }  }$ for any $ \mathbf{r} \in \mathcal{D} $. 
Then   $\nabla _{ \mathbf{r} } U |_{\mathcal{M} ^{ + }  }=0 $ if and only if  $\nabla _{ \mathbf{r} } U |_{\mathcal{N}  }=0$, that is, $\mathbf{r}$  is a critical point of  $  U|_{ \mathcal{N} } $  if and only if  $ \mathbf{r} $  is  a critical point of the function $  U| _{ \mathcal{M} ^{ + } }$.

 %      and hence 
 %       \begin{align*} 
%            \nabla _{ \mathbf{r} }   U |_{\mathcal{M^+} } =   \nabla _{ \mathbf{r} }  U & +M\lambda\nabla _{ \mathbf{r} }  I + \sigma  \nabla _{ \mathbf{r} } P \\
%           & = \nabla _{ \mathbf{r} }  U + M\lambda\nabla _{ \mathbf{r} }  I +  \frac{ \sigma } {2 Q (\mathbf{r})   } \nabla _{ \mathbf{r} }  H \\
%           & =  \nabla  _{ \mathbf{r} } U +M \lambda \nabla _{ \mathbf{r} } I + \eta \nabla _{ \mathbf{r} }  H\\
%           & = \nabla _{ \mathbf{r} }  U| _{ \mathcal{N} } 
%       \end{align*} 
%where $ \eta =  \frac{ \sigma } { 2 Q (\mathbf{r}) }  $. 
\end{proof} 
Proposition \ref{prop:gradients} says that if 
$ \mathbf{r} \in \mathcal{D} $, we can find the critical points of $ U|_{ \mathcal{N} } $ by finding the critical points of $ U|_{\mathcal{M} ^{ + } } $. 
The equations of the critical points of $ U| _{ \mathcal{M} ^{ + } } :\,{ \mathcal{M} ^{ + } } \to \mathbb{R} $, are given by $ \nabla _{ \mathbf{r} }  L (\mathbf{r} ; \lambda, \sigma ) = \nabla _{ \mathbf{r} }  U + \lambda M \nabla  _{ \mathbf{r} } I+ \sigma \nabla _{ \mathbf{r} }  P $, the gradient of the  Lagrangian $ L $. Explicitly, we have
\begin{align} 
m _1 m _2 ( r _{ 12 } ^{ - 3 } - \lambda)  & = \sigma\, \frac{ r _{ 34 } } { r _{ 12 } }   &   m _3 m _4 ( r _{ 34 } ^{ - 3 } - \lambda)  & = \sigma\, \frac{ r _{ 12 }}{ r _{ 34 }}  \label{eqn:cc_n1}\\ 
m _1 m _3 (r _{ 13 } ^{ - 3 } - \lambda) & = - \sigma   \frac{ r _{ 24 }}{ r _{ 13 }} &  m _2 m _4 (r _{ 24 } ^{ - 3 } - \lambda) & = - \sigma  \frac{  r _{ 13 } } {  r _{ 24 } }  \label{eqn:cc_n2}\\
m _1 m _4 (r _{ 14 } ^{ - 3 } - \lambda) & =  \sigma  \frac{  r _{ 23 } } {  r _{ 14 } } &  m _2 m _3 (r _{ 23 } ^{ - 3 } - \lambda) & =  \sigma \frac{   r _{ 14 }  } {r _{ 23 }}.\label{eqn:cc_n3}
\end{align}
It is important to observe that these equations hold for $ \mathbf{r} \in \mathcal{M} ^{ + } $, and not just for $ \mathbf{r} \in \mathcal{D} $, however, if $ \mathbf{r} \not\in \mathcal{D} $ then these equations do not give central configurations.  
Since $ \mathbf{r} \in \mathcal{M} ^{ + } $, the constraints  $ I - 1 = 0 $ and  $ P= 0 $ must be satisfied, but $ H = 0 $ is not required. When  $ \mathbf{r} \in \mathcal{D} \subset \mathcal{M} ^{ + }  $, however, $ H = 0 $ and the solutions of these  equations  give co-circular central configurations.  
%These equations  give co-circular central configurations when $ \mathbf{r} \in \mathcal{D} \subset \mathcal{M} ^{ + }  $, and hence $ H = 0 $.% see \cite{santoprete2018four}.

The equations have been grouped in pairs so that when they are multiplied together the product of the right-hand sides is  $   \sigma ^2  $. This yields the  well-known  relation of Dziobek \cite{dziobek1900uber}
\begin{equation}\label{eqn:dziobek}
   (r _{ 12 }^{ - 3 }  - \lambda) (r _{ 34 } ^{ - 3 } - \lambda) = (r _{ 13 } ^{  - 3 } - \lambda) (r _{ 24 } ^{ - 3 } - \lambda) = (r _{ 14 } ^{ - 3 } - \lambda) (r _{ 23 } ^{ - 3 } - \lambda),        
\end{equation} 
which is required of any   4-body  Dziobek central configuration. From equations  (\ref{eqn:cc_n1}),(\ref{eqn:cc_n2}) and (\ref{eqn:cc_n3}) we  obtain three equations for $ \sigma ^2 $:
\begin{align} 
    \sigma ^2 & = m _1 m _2 m _3 m _4 (r _{ 12 } ^{ - 3 } - \lambda ) (r _{ 34 } ^{ - 3 } - \lambda)\label{eqn:sigma1}   \\
    \sigma ^2 & = m _1 m _2 m _3 m _4 (r _{ 14 } ^{ - 3 } - \lambda ) (r _{ 23 } ^{ - 3 } - \lambda)  \label{eqn:sigma2}  \\
    \sigma ^2 & = m _1 m _2 m _3 m _4 (r _{ 13 } ^{ - 3 } - \lambda ) (r _{ 24 } ^{ - 3 } - \lambda) \label{eqn:sigma3}. 
\end{align} 
\section{Uniqueness of Co-circular configurations}
In this section we want to prove Theorem \ref{thm:uniqueness}. We  break down the proof in four steps, which we summarize here.
\begin{enumerate} 
    \item%[Step 1] 
       We show that if $ \mathbf{r} \in \mathcal{M} ^{ + }   $ is a critical point of $ U|_{ \mathcal{M} ^{ + }  } $, then it is necessarily a nondegenerate local minimum. This is proved in Proposition \ref{prop:crit-points}. Lemma \ref{lem:lambda} is a technical lemma required to prove Proposition  \ref{prop:crit-points}.

    \item%[Step 2] 
       We study the topology of $ \mathcal{M}  $ and $ \mathcal{M} ^{ + } $. 
       In Lemma \ref{lem:M} we show that $ \mathcal{M} \approx S ^2 \times S ^2 $. In  Lemma \ref{lem:M+} we show that  $ \mathcal{M} ^{ + } $ is contractible and  the  Euler characteristic  $ \chi (\mathcal{M} ^{ + })   $  of $\mathcal{M} ^{ + } $ is $1$.  

    \item%[Step 3]
       We use Morse theory to prove that the function  $U| _{ \mathcal{M} ^{ + } }  $ has a unique critical point  on $ \mathcal{M} ^{ + } $. This is done in  Lemma \ref{lem:uniqueness}. 
      \item We prove the theorem.
\end{enumerate} 
We start with the following technical lemma which is needed in  the proof of Proposition  \ref{prop:crit-points}.
 \begin{lemma} \label{lem:lambda}
 If $ \mathbf{r} ^\ast  \in \mathcal{M} ^{ + }   $  is a critical point of $ U| _{\mathcal{M} ^{ + }  } $ 
then $ \lambda >0 $. 
 \end{lemma}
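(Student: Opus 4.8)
The plan is to extract $\lambda$ from the critical point equations \eqref{eqn:cc_n1}–\eqref{eqn:cc_n3} and show it must be positive. First I would multiply each of the six scalar equations by the corresponding distance: writing $s_{ij}=r_{ij}^{-3}-\lambda$, equations \eqref{eqn:cc_n1}–\eqref{eqn:cc_n3} give $m_1m_2 r_{12}s_{12}=\sigma r_{34}$, $m_3m_4 r_{34}s_{34}=\sigma r_{12}$, $m_1m_3 r_{13}s_{13}=-\sigma r_{24}$, $m_2m_4 r_{24}s_{24}=-\sigma r_{13}$, $m_1m_4 r_{14}s_{14}=\sigma r_{23}$, $m_2m_3 r_{23}s_{23}=\sigma r_{14}$. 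The key observation is a sign pattern: the right-hand sides involving the diagonal $r_{13}r_{24}$ carry a minus sign while the four side-pair equations carry a plus sign. Multiplying the first two gives $m_1m_2m_3m_4\, r_{12}r_{34}\,s_{12}s_{34}=\sigma^2 r_{12}r_{34}$, hence $s_{12}s_{34}=\sigma^2/(m_1m_2m_3m_4)\ge 0$; similarly $s_{14}s_{23}\ge0$, while multiplying the two diagonal equations gives $s_{13}s_{24}=\sigma^2/(m_1m_2m_3m_4)\ge0$ as well — these are just \eqref{eqn:sigma1}–\eqref{eqn:sigma3} again. So each of the three products $s_{12}s_{34}$, $s_{14}s_{23}$, $s_{13}s_{24}$ is $\ge0$; in particular within each pair the two factors have the same sign (or one vanishes).

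Next I would rule out $\sigma=0$: if $\sigma=0$ then every $s_{ij}=0$, forcing all six distances equal, i.e. a regular tetrahedron, which is not coplanar and so not in $\mathcal{M}^+$ (it fails $H=0$, and indeed fails the geometric realizability/Ptolemy constraint defining $\mathcal{P}$). Hence $\sigma\neq0$ and all six $s_{ij}\neq0$, so within each of the three pairs both factors are strictly of the same sign. Now I would use the Dziobek relation \eqref{eqn:dziobek}. The standard fact (which I'd derive from the sign structure of the side-pair equations versus the diagonal equation, using positivity of masses and distances) is that the pair of ``side'' products sit on the opposite side of $\lambda$ from the ``diagonal'' product in the following sense: adding an equation from \eqref{eqn:cc_n1} to the paired one and comparing with the sum from \eqref{eqn:cc_n2} should pin down that $s_{12},s_{34},s_{14},s_{23}$ all have one sign and $s_{13},s_{24}$ the opposite sign. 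Concretely, summing $m_1m_2 r_{12}s_{12}+m_3m_4 r_{34}s_{34}=\sigma(r_{12}+r_{34})$ and $m_1m_3r_{13}s_{13}+m_2m_4r_{24}s_{24}=-\sigma(r_{13}+r_{24})$ shows that if $\sigma>0$ then the four side-$s_{ij}$ are positive and the two diagonal-$s_{ij}$ negative (and conversely for $\sigma<0$); either way $r_{13}^{-3},r_{24}^{-3}$ lie on one side of $\lambda$ and $r_{12}^{-3},r_{34}^{-3},r_{14}^{-3},r_{23}^{-3}$ on the other.

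Finally I would conclude positivity of $\lambda$ from the geometry of a cyclic quadrilateral. Since $\mathbf r^\ast\in\mathcal{M}^+$, the four bodies are coplanar (Lemma \ref{lem:h}) and sequentially ordered with $r_{13},r_{24}$ the diagonals, so in a convex cyclic quadrilateral each diagonal is longer than at least one of the adjacent sides — more precisely, a diagonal exceeds the shortest side. Combined with the sign dichotomy from the previous step, $\lambda$ is squeezed strictly between $r_{13}^{-3}$ (or $r_{24}^{-3}$) and the four side values $r_{ij}^{-3}$, forcing $\lambda$ to lie strictly between the reciprocal-cubes of the longest diagonal and the shortest side; since the shortest side is a positive real number, $r_{\text{shortest side}}^{-3}>0$, and $\lambda$ being pinched on the positive side of it (or between two positive numbers) gives $\lambda>0$. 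I expect the main obstacle to be making the sign/ordering argument watertight: showing that the diagonals genuinely sit on the opposite side of $\lambda$ from \emph{all four} sides simultaneously, rather than merely some of them, and handling the bookkeeping of which distances are largest in an arbitrary sequentially ordered cyclic quadrilateral. A clean way to finish, avoiding casework, is to note $\sum_{ij} m_im_j r_{ij}^2 s_{ij}\cdot(\text{something})$ — or more simply, to take the weighted combination of \eqref{eqn:cc_n1}–\eqref{eqn:cc_n3} that reproduces $\nabla_{\mathbf r}U\cdot\mathbf r^\ast = -U(\mathbf r^\ast)<0$ against $\lambda M\nabla_{\mathbf r}I\cdot\mathbf r^\ast = 2\lambda M I(\mathbf r^\ast)=2\lambda M>0$ and $\sigma\nabla_{\mathbf r}P\cdot\mathbf r^\ast$; since $P$ is homogeneous of degree $2$, $\nabla_{\mathbf r}P\cdot\mathbf r^\ast = 2P(\mathbf r^\ast)=0$, so dotting the Lagrange equation $\nabla_{\mathbf r}U+\lambda M\nabla_{\mathbf r}I+\sigma\nabla_{\mathbf r}P=0$ with $\mathbf r^\ast$ gives $-U(\mathbf r^\ast)+2\lambda M=0$, i.e. $\lambda = U(\mathbf r^\ast)/(2M)>0$. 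This Euler-identity shortcut is almost certainly the intended argument, and the homogeneity of $U$ (degree $-1$) and $P$ (degree $2$) together with $I(\mathbf r^\ast)=1$ makes it immediate.
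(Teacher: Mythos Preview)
Your final paragraph---the Euler-homogeneity argument---is correct and self-contained: dotting $\nabla_{\mathbf r}U+\lambda M\nabla_{\mathbf r}I+\sigma\nabla_{\mathbf r}P=0$ with $\mathbf r^\ast$ and using that $U$, $I$, $P$ are homogeneous of degrees $-1$, $2$, $2$ gives $-U(\mathbf r^\ast)+2\lambda M\cdot 1+2\sigma\cdot 0=0$, hence $\lambda=U(\mathbf r^\ast)/(2M)>0$. This is, however, \emph{not} the paper's argument. The paper gives a two-line sign contradiction: assume $\lambda\le 0$; then the first equation in \eqref{eqn:cc_n1} reads $\sigma\,r_{34}/r_{12}=m_1m_2(r_{12}^{-3}-\lambda)>0$, forcing $\sigma>0$, while the first equation in \eqref{eqn:cc_n2} reads $-\sigma\,r_{24}/r_{13}=m_1m_3(r_{13}^{-3}-\lambda)>0$, forcing $\sigma<0$. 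Your Euler argument is arguably more conceptual (it identifies $\lambda$ explicitly and would work for any homogeneous constraint replacing $P$), while the paper's is shorter and uses nothing beyond the sign structure of the Lagrange equations.

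The exploratory material preceding your final argument has real gaps that you should discard rather than try to patch. First, points of $\mathcal M^{+}$ are \emph{not} assumed geometrically realizable: $\mathcal M^{+}=\{\mathbf r\in(\mathbb R^+)^6:\ I=1,\ P=0\}$ imposes neither $H\ge 0$ nor the triangle inequalities, so invoking Lemma~\ref{lem:h} or ``the geometry of a cyclic quadrilateral'' is illegitimate here. (Your remark that the equal-distance point ``fails $H=0$'' is likewise off target; the correct reason it is excluded is that $P=r^2\neq 0$.) Second, your sign-dichotomy step is not a proof: from $m_1m_2r_{12}s_{12}+m_3m_4r_{34}s_{34}=\sigma(r_{12}+r_{34})>0$ one cannot conclude that \emph{both} $s_{12}$ and $s_{34}$ are positive without already knowing they share a sign---and you used \eqref{eqn:sigma1} for that, which only gives the product $\ge 0$, allowing one factor to vanish. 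None of this matters once you commit to the Euler identity, which needs neither $\sigma\neq 0$ nor any geometry.
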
 
 \begin{proof}
     Suppose, for the sake of contradiction, that  $ \lambda \leq   0 $.  By the first of the two equation \eqref{eqn:cc_n1} we find that  
     \[\sigma \frac{ r _{ 34 } } { r _{ 12 } } = m _1 m _2 (r _{ 12 } ^{ - 3 } - \lambda) >0 \]
    and hence $ \sigma >0 $, since $ r _{ 12 }, r _{ 34 }  >0 $ in $ \mathcal{M} ^{ + } $.  
 By the first of the two equation \eqref{eqn:cc_n2} we find that  
     \[ -\sigma \frac{ r _{ 24 } } { r _{ 13 } }= m _1 m _3 (r _{ 13 } ^{ - 3 } - \lambda)  >0 \]
    and hence $ \sigma <0 $, which contradicts the fact that $ \lambda \leq 0  $. Hence, $ \lambda >0 $.   

 \end{proof} 

Note that the second derivative  $ D ^2 L (\mathbf{r} ; \lambda , \sigma) $  of $ L (\cdot ; \lambda , \sigma) $ with respect to the variable $\mathbf{r}$ is the matrix 
\[
    D ^2 L (\mathbf{r} ; \lambda , \sigma) = D ^2 U (\mathbf{r}) + \lambda M D ^2 I  (\mathbf{r}) + \sigma D ^2  P (\mathbf{r}).   
\]
If $ \mathbf{r} $ is a critical point of $ U | _{ \mathcal{M} ^{ + } } $ this second derivative, with appropriate choices of $ \lambda $  and $ \sigma $,  is the second derivative of  $ U | _{ \mathcal{M} ^{ + } } $, the restriction of $ U $ to $ \mathcal{M} ^{ + } $. We can now prove the following proposition

\begin{proposition} \label{prop:crit-points}
If $ \mathbf{r} ^\ast  \in \mathcal{M} ^{ + }   $  is a critical point of $ U| _{\mathcal{M} ^{ + }  } $ 
then $ \mathbf{r} ^\ast $ is  a nondegenerate minimum point for $ U|_{\mathcal{M} ^{ + } } $.
\end{proposition}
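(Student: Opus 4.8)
The plan is to prove the stronger statement that the Hessian of the Lagrangian,
\[D^2 L(\mathbf{r}^\ast;\lambda,\sigma) = D^2 U(\mathbf{r}^\ast) + \lambda M\, D^2 I(\mathbf{r}^\ast) + \sigma\, D^2 P(\mathbf{r}^\ast),\]
is positive definite on all of $\mathbb{R}^6$, and then to invoke the standard second-order sufficient condition for constrained extrema. Since the gradients of $I$ and $P$ are independent (so $\mathcal{M}^+$ is a smooth manifold near $\mathbf{r}^\ast$) and $\mathbf{r}^\ast$ satisfies the first-order conditions \eqref{eqn:cc_n1}--\eqref{eqn:cc_n3}, the intrinsic Hessian of $U|_{\mathcal{M}^+}$ at $\mathbf{r}^\ast$ is precisely the restriction of $D^2 L$ to the tangent space $T_{\mathbf{r}^\ast}\mathcal{M}^+$; if $D^2 L$ is positive definite on $\mathbb{R}^6$ it is in particular positive definite, hence nonsingular, on that subspace, so $\mathbf{r}^\ast$ is a nondegenerate local minimum.

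First I would write all three Hessians in the coordinates $\mathbf{r} = (r_{12}, r_{13}, r_{14}, r_{23}, r_{24}, r_{34})$. Because $U = \sum_{i<j} m_i m_j r_{ij}^{-1}$ and $I = (2M)^{-1}\sum_{i<j} m_i m_j r_{ij}^2$ are separable in the $r_{ij}$, both $D^2 U$ and $M\,D^2 I$ are diagonal, with entries $2 m_i m_j r_{ij}^{-3}$ and $m_i m_j$ respectively, so $D^2 U + \lambda M\, D^2 I$ is diagonal with positive entries $a_{ij} := m_i m_j\big(2 r_{ij}^{-3} + \lambda\big)$ — here Lemma \ref{lem:lambda} enters, guaranteeing $\lambda > 0$. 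Since $P = r_{12} r_{34} + r_{14} r_{23} - r_{13} r_{24}$ is multilinear, $D^2 P$ has zero diagonal, and its only nonvanishing entries couple $r_{12}$ with $r_{34}$ (value $+1$), $r_{14}$ with $r_{23}$ (value $+1$), and $r_{13}$ with $r_{24}$ (value $-1$). Hence $D^2 L$ is block diagonal for the partition of coordinates into the three pairs $\{r_{12},r_{34}\}$, $\{r_{14},r_{23}\}$, $\{r_{13},r_{24}\}$ — exactly the pairing used to group \eqref{eqn:cc_n1}--\eqref{eqn:cc_n3} — each block being a symmetric $2\times2$ matrix with the corresponding $a_{ij}$'s on the diagonal and $\pm\sigma$ off the diagonal.

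The crux — a one-line computation — is then to check that each such block is positive definite, i.e. (its diagonal entries being positive) that $a_{ij} a_{kl} - \sigma^2 > 0$ for each pair. For $\{r_{12},r_{34}\}$, substituting $\sigma^2$ from \eqref{eqn:sigma1} and cancelling $m_1 m_2 m_3 m_4 > 0$ reduces this to
\[\big(2 r_{12}^{-3} + \lambda\big)\big(2 r_{34}^{-3} + \lambda\big) - \big(r_{12}^{-3} - \lambda\big)\big(r_{34}^{-3} - \lambda\big) = 3\big(r_{12}^{-3} r_{34}^{-3} + \lambda r_{12}^{-3} + \lambda r_{34}^{-3}\big) > 0,\]
the positivity being clear since $\lambda, r_{12}, r_{34} > 0$; the pairs $\{r_{14},r_{23}\}$ and $\{r_{13},r_{24}\}$ are handled identically using \eqref{eqn:sigma2} and \eqref{eqn:sigma3}. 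Thus all three blocks, and so $D^2 L(\mathbf{r}^\ast;\lambda,\sigma)$ itself, are positive definite, and the proposition follows.

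The computations are routine; the only point that deserves care is the identification of the intrinsic Hessian of $U|_{\mathcal{M}^+}$ at the constrained critical point with the restricted Lagrangian Hessian, which is what both yields the local-minimum conclusion and certifies nondegeneracy. It is worth emphasizing that the argument uses only $\mathbf{r}^\ast \in \mathcal{M}^+$ and never the condition $H(\mathbf{r}^\ast) = 0$, so it covers every critical point of $U|_{\mathcal{M}^+}$, not only those lying in $\mathcal{D}$; this generality is what will make the Morse-theoretic count on $\mathcal{M}^+$ go through in Lemma \ref{lem:uniqueness}.
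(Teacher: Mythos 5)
Your proposal is correct and follows essentially the same route as the paper: both use Lemma \ref{lem:lambda} to get $\lambda>0$, observe that $D^2U+\lambda M D^2 I$ is diagonal with entries $m_im_j(2r_{ij}^{-3}+\lambda)$ while $D^2P$ only couples the pairs $\{r_{12},r_{34}\}$, $\{r_{14},r_{23}\}$, $\{r_{13},r_{24}\}$, and then eliminate $\sigma^2$ via \eqref{eqn:sigma1}--\eqref{eqn:sigma3} to get the same positivity $3\left(r_{ij}^{-3}r_{kl}^{-3}+\lambda r_{ij}^{-3}+\lambda r_{kl}^{-3}\right)>0$. Your reorganization of the check as three $2\times2$ blocks (after permuting coordinates) is just a cleaner packaging of the paper's computation of the six leading principal minors, whose nontrivial factors $A_0,A_1,A_2$ are exactly your block determinants.
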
 
\begin{proof} 
   The second derivative of $ L $ is the matrix
   \begin{align*} D ^2 L  (\mathbf{r} ; \lambda , \sigma) = & 
       \operatorname{diag}(f _{ 12 } (\mathbf{r}),f _{ 13 } (\mathbf{r}),f _{ 14 } (\mathbf{r}),
       f _{ 23 } (\mathbf{r}),f _{ 24 } (\mathbf{r}),f _{ 34 } (\mathbf{r})) \\
      &  + \operatorname{adiag } (\sigma ,- \sigma, \sigma, \sigma,- \sigma, \sigma)    
  \end{align*}
where $ f _{ ij } (\mathbf{r}) = m _i m _j (2 r _{ ij } ^{ - 3 } + \lambda) $. Here, 
$ \operatorname{diag } (f _{ 12 } ,f _{ 13 } ,f _{ 14 } ,
       f _{ 23 },f _{ 24 } ,f _{ 34 }) $ denotes the  $ 6 \times 6 $ diagonal matrix whose diagonal entries are $ f _{ 12 } ,f _{ 13 } ,f _{ 14 } ,
       f _{ 23 } ,f _{ 24 },f _{ 34 }$.
       Similarly, $ \operatorname{adiag } (\sigma ,- \sigma, \sigma, \sigma,- \sigma, \sigma) $ denotes the $ 6 \times 6 $ anti-diagonal matrix whose anti-diagonal   entries, starting from the upper right corner,  are  $ \sigma ,- \sigma, \sigma$, $\sigma,- \sigma, \sigma $.

Let $ \Delta _k (\mathbf{r})   $ be the  principal minor of order $ k $ of $  D ^2 L  (\mathbf{r} ; \lambda , \sigma) $  . We first prove that if $ \mathbf{r} ^\ast $ satisfies equations  (\ref{eqn:cc_n1}-\ref{eqn:cc_n3}), then  $ \Delta _k (\mathbf{r} ^\ast)  >0 $ for $ k = 1, \ldots 6 $. 
Since $ \lambda >0 $  by Lemma \ref{lem:lambda} it  is easy to see that the first three principal minors are always positive   
\begin{align*} 
    \Delta _1 ( \mathbf{r})   & =  \frac{{\left(\lambda r_{12}^{3} + 2\right)} m_{1} m_{2}}{r_{12}^{3}}>0\\
    \Delta _2 (\mathbf{r})  & = \frac{{\left(\lambda r_{12}^{3} + 2\right)} {\left(\lambda r_{13}^{3} + 2\right)} m_{1}^{2} m_{2} m_{3}}{r_{12}^{3} r_{13}^{3}}>0\\
    \Delta _3 (\mathbf{r})   & = \frac{{\left(\lambda r_{12}^{3} + 2\right)} {\left(\lambda r_{13}^{3} + 2\right)} {\left(\lambda r_{14}^{3} + 2\right)} m_{1}^{3} m_{2} m_{3} m_{4}}{r_{12}^{3} r_{13}^{3} r_{14}^{3}}>0.
\end{align*}
Let 
%\begin{align*}
%   A _0 (\mathbf{r})    & = \lambda^{2} m_{1} m_{2} m_{3} m_{4} r_{12}^{3} r_{34}^{3} + 2 \, \lambda m_{1} m_{2} m_{3} m_{4} r_{12}^{3} + 2 \, \lambda m_{1} m_{2} m_{3} m_{4} r_{34}^{3} - r_{12}^{3} r_{34}^{3} \sigma^{2} + 4 \, m_{1} m_{2} m_{3} m_{4} \\
%   A _1 (\mathbf{r})   & = \lambda^{2} m_{1} m_{2} m_{3} m_{4} r_{14}^{3} r_{23}^{3} + 2 \, \lambda m_{1} m_{2} m_{3} m_{4} r_{14}^{3} + 2 \, \lambda m_{1} m_{2} m_{3} m_{4} r_{23}^{3} - r_{14}^{3} r_{23}^{3} \sigma^{2} + 4 \, m_{1} m_{2} m_{3} m_{4}\\
%   A _2 (\mathbf{r})  & = \lambda^{2} m_{1} m_{2} m_{3} m_{4} r_{13}^{3} r_{24}^{3} + 2 \, \lambda m_{1} m_{2} m_{3} m_{4} r_{13}^{3} + 2 \, \lambda m_{1} m_{2} m_{3} m_{4} r_{24}^{3} - r_{13}^{3} r_{24}^{3} \sigma^{2} + 4 \, m_{1} m_{2} m_{3} m_{4}
%\end{align*}

\begin{align*}
   A _0 (\mathbf{r})    & =m_{1} m_{2} m_{3} m_{4} ( \lambda^{2} r_{12}^{3} r_{34}^{3} + 2 \, \lambda r_{12}^{3} + 2 \, \lambda r_{34}^{3}  + 4 \, )- r_{12}^{3} r_{34}^{3} \sigma^{2}\\
   A _1 (\mathbf{r})   & = m_{1} m_{2} m_{3} m_{4} (\lambda^{2} r_{14}^{3} r_{23}^{3} + 2 \, \lambda r_{14}^{3} + 2 \, \lambda r_{23}^{3}  + 4 \,) - r_{14}^{3} r_{23}^{3} \sigma^{2}\\
   A _2 (\mathbf{r})  & = m_{1} m_{2} m_{3} m_{4} (\lambda^{2} r_{13}^{3} r_{24}^{3} + 2 \, \lambda r_{13}^{3} + 2 \, \lambda r_{24}^{3} + 4 \,) - r_{13}^{3} r_{24}^{3} \sigma^{2}
\end{align*}
then the remaining principal minors are:
\begin{align*}
    \Delta _4  (\mathbf{r})  & = \frac{ m _1  ^2 m _2m _3  } { r _{ 13 } ^3 r _{ 14 } ^3 r _{ 23 } ^3 } (2 r _{ 12 } ^{ - 3 } + \lambda)( \lambda r _{ 13 } ^3 + 2 ) A _1   \\
    \Delta _5 (\mathbf{r})   & = \frac{ m _1 m _2 } { r _{ 13 } ^3 r _{ 14 } ^3 r _{ 23 } ^3 r _{ 24 } ^3 } (2 r _{ 12 } ^{ - 3 } + \lambda) A _1 A _2 \\ 
    \Delta _6  (\mathbf{r})  & = \frac{ 1 } { r _{ 12 } ^3  r _{ 13 } ^3 r _{ 14 } ^3 r _{ 23 } ^3 r _{ 24 } ^3 r _{ 34 } ^3  } A _0 A _1 A _2.   
\end{align*} 
Since $ \lambda >0 $, eliminating $ \sigma ^2 $  from $ A _0 $ using  \eqref{eqn:sigma1}, from $ A _1 $ using  \eqref{eqn:sigma2}, and from $ A _2 $ using   \eqref{eqn:sigma3}, yields 
\begin{align*} 
    A _0 (\mathbf{r} ^\ast) = 3 m _1 m _2 m _3 m _4 (\lambda r _{ 12 } ^3 + \lambda r _{ 34 } ^3 + 1)>0\\ 
    A _1 (\mathbf{r} ^\ast) =3 m _1 m _2 m _3 m _4 (\lambda r _{ 14 } ^3 + \lambda r _{ 23 } ^3 + 1)>0   \\  
    A _2 (\mathbf{r} ^\ast) = 3 m _1 m _2 m _3 m _4 (\lambda r _{ 13 } ^3 + \lambda r _{ 24 } ^3 + 1)>0.
\end{align*} 
Consequently, we have that $ \Delta _k (\mathbf{r} ^\ast) >0 $ for $ k = 4,5,6 $.
Since all the principal minors are positive it follows that  $ D ^2 L(\mathbf{r} ^\ast , \lambda , \sigma ) $ is positive definite and $ \mathbf{r} ^\ast $  is a nondegenerate local minimum of $U| _{ \mathcal{M} ^{ + } } $. 
\end{proof} 

\begin{remark} 
    By Proposition \ref{prop:gradients}
 we see that the gradient of  $ U | _{ \mathcal{M} ^{ + } } $  and the gradient of $ U| _{ \mathcal{N} } $  coincide for any $ \mathbf{r} \in \mathcal{D} $. However,  the second derivative of  $ U | _{ \mathcal{M} ^{ + } } $ is in general different from the second derivative of $ U| _{ \mathcal{N} } $,  because the fact that  $ P $ and $ H $ are tangent at the critical points does not ensure that the quadratic approximation at those points is the same. Furthermore, the property concerning the gradients of $ U | _{ \mathcal{M} ^{ + } } $  and  $ U| _{ \mathcal{N} } $   given in Proposition \ref{prop:gradients} holds on $ \mathcal{D} $ and not on the larger set $ \mathcal{M} ^{ + } $. 
    Consequently, when looking at the second derivatives it is important to  be careful to  consider carefully the various restrictions of $ U $.  
\end{remark}

We now turn to study the topology of $ \mathcal{M} $.

\begin{lemma}\label{lem:M}
  $ \mathcal{M} \approx \operatorname{Gr} _{ + } (2, 4) \approx S ^2 \times S ^2 $.  
\end{lemma}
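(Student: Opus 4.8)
The plan is to recognize $\mathcal{M}$ as the intersection of two quadric hypersurfaces in $\mathbb{R}^6$ and to identify it with the Plücker-embedded oriented Grassmannian. Observe first that, in the coordinates $\mathbf{r}=(r_{12},r_{13},r_{14},r_{23},r_{24},r_{34})$, the function $I(\mathbf{r})=\frac{1}{2M}\sum_{i<j}m_im_jr_{ij}^2$ is a positive-definite quadratic form (the coefficients $m_im_j$ are positive), so $\{I=1\}$ is an ellipsoid, diffeomorphic to $S^5$. Next, $P(\mathbf{r})=r_{12}r_{34}+r_{14}r_{23}-r_{13}r_{24}$ is, after rearranging, the Grassmann--Plücker relation on $\Lambda^2\mathbb{R}^4\cong\mathbb{R}^6$: identifying $r_{ij}$ with the Plücker coordinate $p_{ij}$ of a $2$-vector $\omega=\sum_{i<j}p_{ij}\,e_i\wedge e_j$, a direct computation gives $\omega\wedge\omega=2P(\mathbf{r})\,e_1\wedge e_2\wedge e_3\wedge e_4$. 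Hence $\mathcal{P}=\{P=0\}$ is exactly the cone of decomposable $2$-vectors. (Here one uses the standing convention of Section 4 that the labelling is sequential, so $r_{13},r_{24}$ are the diagonals; this is what makes the signs match the Grassmann relation.)

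The second step is to replace $I$ by the standard Euclidean norm. Since $\mathcal{P}$ is a cone (invariant under multiplication by positive scalars) and both $I$ and $\|\cdot\|^2$ are positive definite, the radial map sending $v\in\{I=1\}$ to the unique $tv\in\{\|\cdot\|^2=1\}$ with $t>0$ is a homeomorphism that preserves $\mathcal{P}$; restricting it yields a homeomorphism $\mathcal{M}=\{I=1\}\cap\mathcal{P}\approx\{\|\cdot\|^2=1\}\cap\mathcal{P}$. The right-hand side is the set of decomposable unit $2$-vectors, which is precisely the image of the Plücker embedding of $\operatorname{Gr}_+(2,4)$: a nonzero $\omega$ with $\omega\wedge\omega=0$ can be written uniquely as $v_1\wedge v_2$ with $(v_1,v_2)$ a positively oriented orthonormal basis of its support plane rescaled so that $\|\omega\|=1$, and the Plücker embedding is a diffeomorphism onto its image (injective continuous map from a compact manifold). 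This gives $\mathcal{M}\approx\operatorname{Gr}_+(2,4)$.

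For the last homeomorphism $\operatorname{Gr}_+(2,4)\approx S^2\times S^2$ I would use the Hodge star $\ast\colon\Lambda^2\mathbb{R}^4\to\Lambda^2\mathbb{R}^4$ with $\ast^2=\mathrm{id}$, splitting $\omega=\omega^++\omega^-$ into self-dual and anti-self-dual parts, each in a $3$-dimensional eigenspace. Then $\omega\wedge\omega=(\|\omega^+\|^2-\|\omega^-\|^2)\,\mathrm{vol}$ while $\|\omega\|^2=\|\omega^+\|^2+\|\omega^-\|^2$, so the decomposable unit $2$-vectors are exactly the pairs $(\omega^+,\omega^-)$ with $\|\omega^+\|=\|\omega^-\|=1/\sqrt{2}$, i.e. a product of two round $2$-spheres. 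Equivalently, completing the squares $r_{12}r_{34}=\tfrac14\bigl((r_{12}+r_{34})^2-(r_{12}-r_{34})^2\bigr)$, and similarly for the other two products in $P$, exhibits a linear change of coordinates in which $P$ has signature $(3,3)$; the same radial-rescaling argument then turns $\mathcal{M}$ into $\{\,\|w\|=1,\ w_1^2+w_2^2+w_3^2=w_4^2+w_5^2+w_6^2\,\}$, which is manifestly $S^2\times S^2$.

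I do not expect a serious obstacle here: the only care needed is the bookkeeping, namely verifying that the sign convention in $P$ agrees with the Grassmann relation under the sequential labelling, that $I$ is positive definite, and that the radial rescaling is a genuine homeomorphism carrying the cone $\mathcal{P}$ to itself. The actual content is the classical identification of the Plücker quadric of $\operatorname{Gr}_+(2,4)$ with $S^2\times S^2$ via the self-dual/anti-self-dual decomposition, which can either be cited or established in the two lines above.
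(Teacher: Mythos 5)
Your proposal is correct and follows essentially the same route as the paper: identify $P$ with the Pl\"ucker quadric of $\operatorname{Gr}_+(2,4)$, normalize the constraint $I=1$ to the round unit sphere, and split the signature-$(3,3)$ form (self-dual/anti-self-dual, i.e.\ completing the squares) to exhibit $S^2\times S^2$. The only cosmetic difference is in the normalization step: the paper uses the linear rescaling $r_{ij}=(2M/(m_im_j))^{1/2}p_{ij}$, which preserves $P$ up to the constant $2M/\sqrt{m_1m_2m_3m_4}$ because the mass products over complementary index pairs coincide, whereas you radially project the ellipsoid $\{I=1\}$ onto the unit sphere using the homogeneity of the cone $\{P=0\}$ --- both are valid.
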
 
\begin{proof}
   Consider the following change of coordinates: 
   \[ r _{ ij } =  \left( \frac{ 2 M   } {( m _i m _j) }\right) ^{ 1/2 }p _{ ij }  , \quad i, j \in {1,2,3,4} \mbox{ with } i< j. \] 
In these coordinates  the equations $I (\mathbf{r})  = 1 $  and $ P (\mathbf{r}) = 0 $ take the form  
\begin{align}
  p _{ 12 } ^2 + p _{ 13 } ^2 + p _{ 14 } ^2 + p _{ 23 } ^2 + p _{ 24 } ^2 + p _{ 34 } ^2  & = 1  \label{eqn:gras1}\\
     \left( \frac{ 2 M   } {   \sqrt{  m _1 m _2 m _3 m _4  }}\right) ( p _{ 12 } p _{ 34 } - p _{ 13 } p _{ 24 } + p _{ 14 } p _{ 23 })&  = 0\label{eqn:gras2}.
\end{align}
These are the  equations of   the oriented Grassmanian 
\[\operatorname{Gr} _{ + } (2, 4) = SO(4)/(SO (2) \times SO (2)   ), \] 
(see \cite{viro2004topology} for details). 
Equations (\ref{eqn:gras1}-\ref{eqn:gras2})   are equivalent to the system  
\begin{align} 
    (p _{ 12 } + p _{ 34 }) ^2 + (p _{ 13 } - p _{ 24 }) ^2 + (p _{ 14 } + p _{ 23 }) ^2 & = 1\label{eqn:spherep1} \\   
    (p _{ 12 } - p _{ 34 }) ^2 + (p _{ 13 } + p _{ 24 }) ^2 + (p _{ 14 } - p _{ 23 }) ^2 & = 1,\label{eqn:spherep2}
\end{align} 
which shows that $ \mathcal{M} $ is diffeomorphic to $ S ^2 \times S ^2 $. 
\end{proof}

We can  now better understand the topology of    $ \mathcal{M} ^{ + } $. 
Discussions with Shengda Hu were very helpful with this next lemma.  

\begin{lemma}\label{lem:M+}
 $\mathcal{M} ^{ + } $ is contractible and its Euler charactersitic $\chi(\mathcal{M}^{ + }) $ is $1$. 
  % The Euler characteristic $\chi(\mathcal{M}^{ + }) $ of $ \mathcal{M} ^{ + } $ is $1$. 
    %$ \chi   (\mathcal{M} _{ + }) = 1 $.  
    %\mathcal{M}  ^{ + } $ is homomorphic to the closed four-dimensional ball. 
\end{lemma}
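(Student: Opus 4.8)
The plan is to use the explicit coordinates from the proof of Lemma~\ref{lem:M}. After setting $a = p_{12}+p_{34}$, $b = p_{13}-p_{24}$, $c = p_{14}+p_{23}$ and $u = p_{12}-p_{34}$, $v = p_{13}+p_{24}$, $w = p_{14}-p_{23}$, equations \eqref{eqn:spherep1}--\eqref{eqn:spherep2} identify $\mathcal{M}$ with $S^2\times S^2$, the first factor carrying $(a,b,c)$ and the second $(u,v,w)$. Inverting this linear change of variables gives $2p_{12}=a+u$, $2p_{34}=a-u$, $2p_{13}=v+b$, $2p_{24}=v-b$, $2p_{14}=c+w$, $2p_{23}=c-w$, and since each $r_{ij}$ is a positive multiple of $p_{ij}$, the subset $\mathcal{M}^{+}$ corresponds precisely to
\[
  \mathcal{M}^{+}\;\cong\;\bigl\{\,((a,b,c),(u,v,w))\in S^2\times S^2 \;:\; a>|u|,\ v>|b|,\ c>|w|\,\bigr\}.
\]
First I would record this description and note that these inequalities force $a>0$, $c>0$ and $v>0$; everything afterwards is a statement about this concrete open subset of $S^2\times S^2$.

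Next I would contract it to a point by an explicit two-stage deformation retraction. In the first stage I keep $(a,b,c)$ fixed and deform $(u,v,w)$ by
\[
  (u,v,w)\ \longmapsto\ \bigl(\,(1-t)u,\ \sqrt{1-(1-t)^2(u^2+w^2)},\ (1-t)w\,\bigr),\qquad t\in[0,1].
\]
Since $u^2+w^2=1-v^2<1$ (as $v>0$) the square root is well defined, the image stays on the second sphere, and its middle coordinate is $\ge v$; hence the inequalities $a>(1-t)|u|$, $\sqrt{1-(1-t)^2(u^2+w^2)}\ge v>|b|$, $c>(1-t)|w|$ persist and the homotopy stays inside $\mathcal{M}^{+}$. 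At $t=1$ it lands in $A:=\{((a,b,c),(0,1,0)) : (a,b,c)\in S^2,\ a>0,\ c>0\}$, which is fixed throughout. In the second stage I contract $A$ --- the open spherical lune $\{a>0,\ c>0\}$ --- onto $(1/\sqrt2,0,1/\sqrt2)$ by the straight-line homotopy in $\mathbb{R}^3$ followed by normalization: the first and third coordinates of $(1-t)(a,b,c)+t(1/\sqrt2,0,1/\sqrt2)$ remain strictly positive, so this vector never vanishes and never leaves the lune. Composing the two stages shows $\mathcal{M}^{+}$ is contractible, and since $\mathcal{M}^{+}$ is a manifold, hence has the homotopy type of a CW complex, this yields $\chi(\mathcal{M}^{+})=1$.

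The substance is really just bookkeeping: inverting the change of coordinates and then verifying that each homotopy preserves all six strict inequalities, i.e.\ never exits $\mathcal{M}^{+}$. The one spot where a naive attempt fails --- and the point I would be most careful about --- is the coupling of the two spheres through $v>|b|$: one cannot rescale $(u,v,w)$ radially, which is why the first stage renormalizes the $v$-coordinate \emph{upward}. As an aside, the same picture can be phrased bundle-theoretically, the projection $\mathcal{M}^{+}\to\{(a,b,c)\in S^2:a>0,\ c>0\}$ having contractible fibers over a contractible base, but the explicit deformation retraction above is more self-contained and I would use it.
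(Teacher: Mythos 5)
Your proof is correct and follows essentially the same route as the paper: the same change of variables identifying $\mathcal{M}$ with $S^2\times S^2$ via \eqref{eqn:spherep1}--\eqref{eqn:spherep2}, the same translation of the six positivity conditions into three inequalities coupling the two sphere factors, and the same geometric picture of one sphere factor parametrizing a family of lune-like regions in the other. Where you differ is in how contractibility is extracted from that picture: the paper declares the projection onto the closed upper hemisphere of the second factor to be a fibration with contractible fibers over a contractible base and concludes from there, whereas you write down an explicit two-stage deformation retraction (first pushing the second-factor coordinates to $(0,1,0)$, then contracting the resulting lune in the first factor). Your version is the more self-contained of the two, since in the paper's map the fibers degenerate to single points over $(w_1,w_3)=(\pm1,0)$ and $(0,\pm1)$, so it is not a fiber bundle and the homotopy-lifting property needed to justify the "contractible fibers over a contractible base" conclusion is asserted rather than proved; your explicit homotopies sidestep this issue entirely. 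One small mismatch to repair: the paper's $\mathcal{M}^{+}$ is defined by $r_{ij}\geq 0$ (it is a manifold with boundary on which $U\to\infty$), so the correct description uses $a\geq|u|$, $v\geq|b|$, $c\geq|w|$ rather than your strict inequalities, which describe only the interior. This costs nothing: both of your homotopies preserve the non-strict inequalities verbatim --- the square root is still defined because $u^2+w^2=1-v^2\leq 1$, and in the second stage the straight-line combination never vanishes since for $t>0$ its first coordinate is $(1-t)a+t/\sqrt{2}>0$ with $a\geq 0$ --- so the same argument proves the closed set contractible, and the Euler characteristic conclusion follows as you state.
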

\begin{proof}
After the change of variable  
    \begin{align*} 
    v _1 &  = p _{ 12 } + p _{ 34 }  & v  _2 & = p _{ 13 } - p _{ 24 } & v _3 & = p _{ 14 } + p _{ 23 } \\
    w _1 &  = p _{ 12 } - p _{ 34 }  & w _2  & = p _{ 13 } + p _{ 24 } & w _3 & = p _{ 14 } - p _{ 23 }.
\end{align*} 
equations \eqref{eqn:spherep1} and \eqref{eqn:spherep2} can be rewritten in the form    

    \begin{align*} 
        S _1 & = \{ v=(v _1 , v _2 , v _3) \in \mathbb{R}  ^3 : \, v _1 ^2 + v _2 ^2 + v _3 ^2 = 1 \},
        \\ \quad  S _2&  = \{ w= (w _1 , w _2 , w _3)   \in \mathbb{R}  ^3 : \, w _1 ^2 + w _2 ^2 + w _3 ^2 = 1\} .
    \end{align*} 
Clearly the set $ \mathcal{M} ^{ + } $  is  homeomorphic to $ E $,  the subset of  
$ S _1 \times S _2 $ defined by the following inequalities
%A point  $ (v,w)   \in \mathbb{R} ^3 \times \mathbb{R} ^3 \approx \mathbb{R}  ^6 $ is an element of $ \mathcal{M} ^{ + } $ if $ (u, v) \in S _1 \times S _2 $ and it satisfies the following inequalities:
\begin{align*}
    p _{ 12 } & = \frac{ v _1 + w _1 } { 2 }  \geq 0 & p _{ 13} & = \frac{ v _2 + w _2 } { 2 } \geq 0 &  p _{ 14 } & = \frac{ v _3 + w _3 } { 2 } \geq 0 \\ 
     p _{ 34 } & = \frac{ v _1 - w _1 } { 2 }  \geq 0 & p _{ 24 } & = \frac{ w _2 -v _2 } { 2 } \geq 0 &  p _{ 23} & = \frac{ v _3 - w _3 } { 2 } \geq 0.
\end{align*}
These inequalities can be  expressed in   a more compact form as 
\[
    v _1 \geq | w _1 |, \quad v _3 \geq |w _3 |, \quad w _2 \geq |v _2 |,
\]
and on  $ S _1 \times S _2 $ they  reduce to 
\begin{equation}\label{eqn:ineq} 
    v _1 \geq | w _1 |, \quad v _3 \geq |w _3 |, \quad w _2 \geq 0.
\end{equation} 
This can be explained as follows. The first two inequalities in \eqref{eqn:ineq} imply that $ |v _1 |^2 + |v _3 |^2 \geq |w _1| ^2 +| w _3| ^2 $, which gives 
\[
    |w _2 | = \sqrt{ 1 - |w _1| ^2 - |w _3| ^2 } \geq \sqrt{ 1 - |v _1 |^2 - |v _3 | ^2 } = |v _2 |.
\]
Hence $ w _2 > |v _2 | $ since $ w _2  \geq 0 $. 

The last inequality  in \eqref{eqn:ineq}, namely $ w _2 \geq 0 $,  selects the closed upper hemisphere $ H _2 ^{ + } $  of  $ S _2 $. The hemisphere $ H _2 ^{ + } $ is homeomorphic to a closed disk and  any point on it  can be represented with coordinates  $(w _1 , w _3)$ in $ \{ (w _1 , w _3) \in \mathbb{R}  ^2 | ~w _1 ^2 + w _3 ^2 \leq 1 \} $.

 Corresponding to each  point of coordinates $ (w _1 , w _3) $, there is a region   $F$ of the sphere $ S _1 $ determined by the inequalities  
\[v _1 \geq | w _1 |, \quad v _3 \geq |w _3 |.\] 
If $ (w _1,w _3)   =(\pm 1,0) $ then    $ (v_1,v_2,v_3)=(1,0,0) $. If $  (w _1 ,w _3) = (0,\pm 1) $ then $ (w _1, w _2 , w _3) = (0,0,1) $. Hence, in these cases the region $ F $ reduces to a point.  For any other value of $ (w _1 , w _3) $  the region $ F $ is homeomorphic to a closed 2-disk. 
The restriction of the  projection  $\tilde  p: (v _1 , v _2 , v _3, w _1 , w _2 , w _3) \to (w _1 , w _2 , w _3)$, induces a fibration  $ p :E \to H _2 ^{ + } $ with base space $ H _{ 2 } ^{ + } $ and fibers given by  $ F $. Thus, the projection $p$ is a fibration with contractible fibers. Since $H_2^+$ is also contractible, we see that the space $E$ is  contractible. In particular, $ \chi (\mathcal{M} ^{ + }) =  \chi(E) = 1$.
% Since $  H _2 ^{ + }$ is path connected the Euler characteristic  of $ \mathcal{M} ^{ + } \approx E $ is 
%\[
 %   \chi (\mathcal{M} ^{ + }) =   \chi (E) = \chi (H_2 ^{ + }) \cdot  \chi (F ) = 1 \cdot  1 = 1     
%\]
\end{proof} 

\begin{remark} 
The previous Lemma, and a much more general thorem,  seems to follow from a recent result by  Galashin, Karp, and  Lam \cite{galashin2017totally}.
Let  $ \operatorname{Gr}(k,n) $  denote the Grassmannian of $k$-planes in  
   $ \mathbb{R} ^n $, its  totally nonnegative part $ \operatorname{Gr} _{ \geq } (k, n) $ 
is defined to be the set of $  x \in \operatorname{Gr}(k, n)$  whose Pl\"ucker coordinates are all nonnegative. It has been shown that $  \operatorname{Gr} _{ \geq } (k, n)$ is homeomorphic to a $ k (n - k) $ dimensional  closed ball \cite{galashin2017totally}. Using this result it should be  possible to show that  the oriented Grassmmanian $ \operatorname{Gr} _{ + }  (n, k) $ with all the Pl\"ucker coordinates nonnegative, which we may call {\it totally nonnegative oriented Grassmannian},
 is a $ k (n - k) $ dimensional  closed ball. This would generalize the previous lemma to any oriented Grassmanian  $ \operatorname{Gr} _{ + }  (n, k) $. 
\end{remark}

Since we have determined the topology of $ \mathcal{M} ^{ + } $ we can now use Morse theory to prove the following Lemma
\begin{lemma} \label{lem:uniqueness}
    The function  $U| _{ \mathcal{M} ^{ + } }  $ has a unique critical point  on $ \mathcal{M} ^{ + } $. 
\end{lemma}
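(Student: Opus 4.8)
The plan is to combine the two facts already established: by Proposition~\ref{prop:crit-points}, every critical point of $U|_{\mathcal{M}^+}$ is a nondegenerate local minimum, so it has Morse index $0$; by Lemma~\ref{lem:M+}, $\mathcal{M}^+$ is a contractible compact manifold with boundary, with $\chi(\mathcal{M}^+)=1$, and by the observation immediately after the definition of $\mathcal{M}^+$ we have $U|_{\partial\mathcal{M}^+}=\infty$. The strategy is a standard Morse-theoretic counting argument: since $U$ blows up on the boundary, it attains a minimum in the interior, and all sublevel sets $\{U\le c\}$ are compact and contained in the interior, so Morse theory on the open manifold $\operatorname{int}\mathcal{M}^+$ applies without boundary complications.

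First I would note that $U|_{\mathcal{M}^+}$ is smooth and proper as a function $\operatorname{int}\mathcal{M}^+\to\mathbb{R}$ (it is bounded below by positivity of the Newtonian potential and tends to $+\infty$ at $\partial\mathcal{M}^+$ and at the collision locus, which is part of the boundary of $\mathcal{M}^+$ since $\mathbf{r}\in(\mathbb{R}^+)^6$ forces all $r_{ij}>0$), so in particular a global minimum is attained, and at least one critical point exists. Next, since by Proposition~\ref{prop:crit-points} every critical point is nondegenerate, $U|_{\mathcal{M}^+}$ is a Morse function on $\operatorname{int}\mathcal{M}^+$; since $U$ is proper and bounded below, the number of critical points is finite (nondegenerate critical points are isolated and confined to a compact sublevel set). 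Then I would invoke the Morse inequalities: if $c_k$ denotes the number of critical points of index $k$, then $\sum_k (-1)^k c_k = \chi(\operatorname{int}\mathcal{M}^+) = \chi(\mathcal{M}^+) = 1$, using that $\mathcal{M}^+$ deformation retracts onto its interior's core — more carefully, since $\mathcal{M}^+$ is contractible, so is $\operatorname{int}\mathcal{M}^+$, hence $\chi(\operatorname{int}\mathcal{M}^+)=1$. But by Proposition~\ref{prop:crit-points} every critical point has index $0$, so $c_k=0$ for $k\ge 1$, and the alternating sum collapses to $c_0 = 1$. Therefore there is exactly one critical point.

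The one technical point that needs care — and the main obstacle to a fully rigorous write-up — is justifying the application of Morse theory on the noncompact manifold $\operatorname{int}\mathcal{M}^+$ rather than on a compact manifold with boundary. The clean way to handle this is to replace $\operatorname{int}\mathcal{M}^+$ by a large sublevel set $\mathcal{M}_c^+ = \{\mathbf{r}\in\mathcal{M}^+ : U(\mathbf{r})\le c\}$ for $c$ large enough that it contains all critical points; since $U\to\infty$ on $\partial\mathcal{M}^+$, for $c$ large this set is a compact manifold with smooth boundary $\{U=c\}$ (choosing $c$ a regular value, which exists by Sard), it is contained in $\operatorname{int}\mathcal{M}^+$, and the gradient flow of $U$ shows $\mathcal{M}^+$ deformation retracts onto $\mathcal{M}_c^+$, so $\chi(\mathcal{M}_c^+)=\chi(\mathcal{M}^+)=1$. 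Applying the Morse inequalities to the Morse function $U|_{\mathcal{M}_c^+}$ on this compact manifold with boundary — with all critical points in the interior and the boundary being a level set where the gradient points outward, so the boundary contributes nothing — gives $\sum_k(-1)^k c_k = 1$, and as before all indices are $0$, forcing $c_0=1$. This completes the proof.

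\begin{proof}
By Proposition~\ref{prop:crit-points}, every critical point of $U|_{\mathcal{M}^+}$ in the interior of $\mathcal{M}^+$ is a nondegenerate local minimum, hence has Morse index $0$; in particular $U|_{\mathcal{M}^+}$ restricted to $\operatorname{int}\mathcal{M}^+$ is a Morse function. Since $\mathbf{r}\in(\mathbb{R}^+)^6$ in $\mathcal{M}^+$, the Newtonian potential $U$ is smooth on $\operatorname{int}\mathcal{M}^+$, is bounded below (indeed positive), and satisfies $U\to\infty$ as $\mathbf{r}\to\partial\mathcal{M}^+$, as noted after the definition of $\mathcal{M}^+$. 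Thus $U|_{\mathcal{M}^+}$ is proper on $\operatorname{int}\mathcal{M}^+$ and attains its infimum, so it has at least one critical point, and all critical points lie in a common compact sublevel set.

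Fix a regular value $c$ of $U|_{\mathcal{M}^+}$ (which exists by Sard's theorem) large enough that $\mathcal{M}_c^+ := \{\mathbf{r}\in\mathcal{M}^+ : U(\mathbf{r})\le c\}$ contains every critical point of $U|_{\mathcal{M}^+}$. Because $U\to\infty$ on $\partial\mathcal{M}^+$, the set $\mathcal{M}_c^+$ is a compact smooth manifold with boundary $\{U=c\}$, contained in $\operatorname{int}\mathcal{M}^+$, and the negative gradient flow of $U$ exhibits $\mathcal{M}^+$ as deformation retracting onto $\mathcal{M}_c^+$; hence, using Lemma~\ref{lem:M+},
\[
\chi(\mathcal{M}_c^+) = \chi(\mathcal{M}^+) = 1.
\]
Now $U|_{\mathcal{M}_c^+}$ is a Morse function on a compact manifold with boundary, all of whose critical points lie in the interior, and whose boundary is a regular level set on which $-\nabla U$ points inward. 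Letting $c_k$ denote the number of critical points of index $k$, the Morse relation gives
\[
\sum_{k\ge 0} (-1)^k c_k = \chi(\mathcal{M}_c^+) = 1.
\]
By Proposition~\ref{prop:crit-points}, $c_k = 0$ for all $k\ge 1$, so this reduces to $c_0 = 1$. Therefore $U|_{\mathcal{M}^+}$ has exactly one critical point on $\mathcal{M}^+$.
\end{proof}
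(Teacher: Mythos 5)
Your proposal is correct and follows essentially the same route as the paper: both arguments use Proposition~\ref{prop:crit-points} to conclude every critical point has Morse index $0$, use $U\to\infty$ on $\partial\mathcal{M}^+$ to guarantee existence of an interior global minimum, and then invoke the Morse-theoretic identity $\sum(-1)^{\gamma}C^{\gamma}=\chi(\mathcal{M}^+)=1$ from Lemma~\ref{lem:M+} to force exactly one critical point. Your write-up is simply more careful about the technical justification (passing to a compact regular sublevel set before applying the Morse relations), which the paper glosses over.
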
 

\begin{proof}
   By Proposition \ref{prop:crit-points} any critical point 
    $ \mathbf{r} \in \mathcal{M} ^{ + } $ is a  nondegenerate local minimum of the function 
    $U| _{ \mathcal{M} ^{ + }} $, and hence $ U| _{ \mathcal{M} ^{ + }} $ is a Morse function that approaches $ + \infty $ as $\mathbf{r}  $ approaches $ \partial \mathcal{M} ^{ + } $, the boundary of $ \mathcal{M} ^{ + } $.
Therefore, the function $ U| _{ \mathcal{M} ^{ + } } $ admits a  global minimum value in the interior of $ \mathcal{M} ^{ + } $. 
Suppose there  are  several global  minimum points where the function obtains its least possible value.  By Proposition \ref{prop:crit-points} any of such point must be a  non-degenerate local minimum point. By Lemma \ref{lem:M+}, the Euler characteristic of  $ \mathcal{M} ^{ + } $ is $ \chi (\mathcal{M} ^{ + })  = 1 $.  By Morse theory we have 
\begin{equation}\label{eqn:Morse}
    1=\chi (\mathcal{M} ^{ + }) = \sum (- 1) ^{ \gamma } C ^{ \gamma }   
\end{equation} 
 where the sum is over the critical points,  $ \gamma $ is the Morse index of the critical points and $ C ^{ \gamma } $ is the number of critical points of index $ \gamma $. 
 We know that there is at least one local minimum, and that all the critical points of $ U | _{ \mathcal{M} ^{ + } } $ are local minimum points and hence   have index $ 0 $. However, this function cannot have more than one minimum point since otherwise,  equation \eqref{eqn:Morse} would imply  the existence of at least one non-minimum critical  point, contradicting  Proposition \ref{prop:crit-points}. 
\end{proof} 
 %We can now prove the following theorem:
%\begin{theorem} 
%    There is at most one co-circular  central configuration of four bodies for each ordering of the masses. 
%\end{theorem} 
We are now in a position to prove Theorem \ref{thm:uniqueness}, our main result 
\begin{proof}[Proof of Theorem \ref{thm:uniqueness}]
    Recall that, by Proposition \ref{prop:crit-points}, co-circular central configurations correspond to  distance vectors $ \mathbf{r} \in \mathcal{D} $  that are  critical points of the function $ U| _{ \mathcal{M} ^{ + }  } $. Lemma \ref{lem:uniqueness} shows that  $ U| _{ \mathcal{M} ^{ + }  } $ has a unique critical point on $ \mathcal{M} ^{ + }  $. Since $ \mathcal{D} \subset \mathcal{M} ^{ + } $, there is at most one critical point on $ \mathcal{D} $. Recall that  if 
    $ \mathbf{q} $ and $ \mathbf{q}' $ can be transformed one into the other with a reflection than they are mapped to the same distance vector $ \mathbf{r } $. Hence, we have shown that there is a most one equivalence class (with respect to the equivalence relation $\sim'$) of co-circular central configurations for each ordering of the masses, and the theorem follows.  
\end{proof}

\section*{Acknowledgments}
I would like to thank   Alain Albouy, Shengda Hu, Steven Karp,  Santiago L\'opez de Medrano, and  Alessandro Portaluri for interesting discussions on this work. This work was supported by an NSERC discovery grant. 
% Fakesection
%-------------------------------------------
%\bibliographystyle{amsplain}
\bibliographystyle{plain}

%\AtEveryBibitem{\clearfield{mrnumber}} 
%\bibliographystyle{unsrt}   % this means that the order of references is determined by the order in which the citations appear in the text.
%\nocite{*} 		% The command \nocite{*} causes all items in the database to99229944 be included in the references, regardless of whether or not they are cited in the paper.
\bibliography{references}% list here all the bibliographies that  you need.
%------------------------------------------

\end{document}